\definecolor{WeakGreen}{HTML}{52BD7C}
\definecolor{Red}{HTML}{D7333B}
\definecolor{Green}{HTML}{00973E}
\definecolor{XanaBlue}{HTML}{4D53C8}
\newcommand{\cnot}{\operatorname{CNOT}}
\newcommand{\cz}{\operatorname{CZ}}
\newcommand{\cy}{\operatorname{CY}}
\newcommand{\ciy}{\operatorname{C}\!i\!\operatorname{Y}}
\newcommand{\block}[1][2]{\gategroup[#1,steps=2,style={inner sep=0pt,rounded corners, color=Red}]{}}
\newcommand{\emptyblock}{\gate[2]{\ \ \ }\gategroup[2,steps=1,style={inner sep=0pt,rounded corners, color=Red,fill=white}]{}}
\newtheorem{thm}{Theorem}
\crefname{thm}{Thm.}{Thms.}
\crefname{lemma}{Lemma}{Lemmas}
\newtheorem{definition}{Definition}
\crefname{definition}{Def.}{Defs.}
\crefname{remark}{Remark}{Remarks}
\crefname{prop}{Prop.}{Props.}
\newtheorem{conjecture}[thm]{Conjecture}
\crefname{conjecture}{Conj.}{Conjs.}
\crefname{section}{Sec.}{Secs.}
\crefname{appendix}{App.}{Apps.}
\crefname{table}{Tab.}{Tabs.}
\newcommand{\C}{\mathbb{C}}
\newcommand{\R}{\mathbb{R}}
\newcommand{\Z}{\mathbb{Z}}
\newcommand*{\id}{\mathchoice
  {\openone}
  {\openone}
  {\scalebox{.7}{\openone}} 
  {\scalebox{.5}{\openone}} 
}
\newcommand{\groupG}{\mathcal{G}}
\renewcommand{\det}[1]{\mathrm{det}(#1)}
\begin{document}
\title{Unitary synthesis with optimal brick wall circuits}
\author{David Wierichs}
\affiliation{Xanadu, Toronto, ON, M5G 2C8, Canada}

\author{Korbinian Kottmann}
\affiliation{Xanadu, Toronto, ON, M5G 2C8, Canada}

\author{Nathan Killoran}
\affiliation{Xanadu, Toronto, ON, M5G 2C8, Canada}

\begin{abstract}
    We present quantum circuits with a brick wall structure using the optimal number of parameters and two-qubit gates to parametrize $SU(2^n)$, and provide evidence that these circuits are universal for $n\leq 5$.
    For this, we successfully compile random matrices to the presented circuits and show that their Jacobian has full rank almost everywhere in the domain.
    Our method provides a new state of the art for synthesizing typical unitary matrices from $SU(2^n)$ for $n=3, 4, 5$, and we extend it to the subgroups $SO(2^n)$ and $Sp^\ast(2^n)$.
    We complement this numerical method by a partial proof, which hinges on an open conjecture that relates universality of an ansatz to it having full Jacobian rank almost everywhere.
\end{abstract}

\maketitle

\section{Introduction}

Decompositions of multi-qubit quantum gates into more elementary operations are a persistent theme in quantum computing research across decades~\cite{Khaneja-Glaser,Shende-Bullock-Markov,Shende-Markov-Bullock,Vatan-Williams,Shende-Markov,Krol-Al-Ars,Rakyta-Zimboras,wierichs2025recursive}.
For the particular example of decomposing an arbitrary three-qubit operation into single-qubit rotation gates and $\cnot$ (equivalently, $\cz$ gates), the best analytic decomposition consists of $19$ $\cnot$ gates and a minimum of $63$ rotation gates~\cite{Krol-Al-Ars}.
Also see~\cite{Krol-Al-Ars} for a nice overview of what these analytical methods have achieved.

Optimization-based methods have been reported to achieve general decompositions with $15$ $\cnot$ gates~\cite{Rakyta-Zimboras}, and modern synthesis tools are able to create even better target-specific decompositions~\cite{Younis}, extending the few analytic decompositions known for special three-qubit gates such as the Toffoli ($6$ $\cnot$ gates).
The theoretical lower bound of the $\cnot$ count $c$ for $n$ qubits is given by $c\geq\left\lceil\frac{1}{4}(4^n-3n-1)\right\rceil$~\cite{shende2003minimal}, yielding the lower bound $c\geq 14$ for $n=3$.

Here, we present circuit templates with a brick wall structure that are optimal in both the two-qubit gate count and parameter count, which we conjecture to be universal. These brick wall circuits form our main result, described in \cref{sec:main_results}. 
We consider it instructive to present the thoughts and motivation that went into finding these circuit templates in the first place, and describe them in \cref{sec:derivation}.
We have not yet found an analytic formula or linear algebraic routine computing the circuit parameters from the matrix elements, but we show in \cref{sec:numerical_results} how to employ these templates in practice by compiling them to different classes of unitary matrices via variational optimization.
Together with further numerical evidence, these results lead us to conjecture universality of the presented circuit families.
For \textit{typical} representatives of $SU(2^n)$ on $n=3, 4, 5$ qubits, the presented brick wall circuits thus improve on the best known circuits, both in terms of two-qubit and parameter count.
\Cref{sec:numerical_results} also provides context for applications in which the presented circuits may prove useful.

The number of rotation gates is particularly important for fault-tolerant quantum computing (FTQC) because rotation gates with non-Clifford angles are the main bottleneck in most quantum error correction schemes~\cite{Litinski2019}.
We therefore emphasize the number of non-Clifford rotation gates in our numerical studies and view it as the main figure of merit. Motivated by earlier generations of quantum hardware, optimizing the number of two-qubit gates has in the past been a main focus in the literature on unitary synthesis, and it can be considered a kind of bonus when compiling for FTQC.

We provide further mathematical considerations regarding the universality of these circuits in \cref{sec:math}, leaving a complete proof open.
Beyond generic $SU(2^n)$ matrices, we also provide templates for the subgroups $SO(2^n)$ and $Sp^\ast(2^n)$\footnote{By $Sp^\ast(2^n)$ we denote the symplectic subgroup of the special unitary group $SU(2^n)$. This is often denoted $Sp(2^{n-1})$ in the literature. Because this can be confusing compared to the notation for $SU$ and $SO$, we adopt a simpler, unified form.} in \cref{app:SON} and \cref{app:SpN}, respectively.

\section{Main results}
\label{sec:main_results}

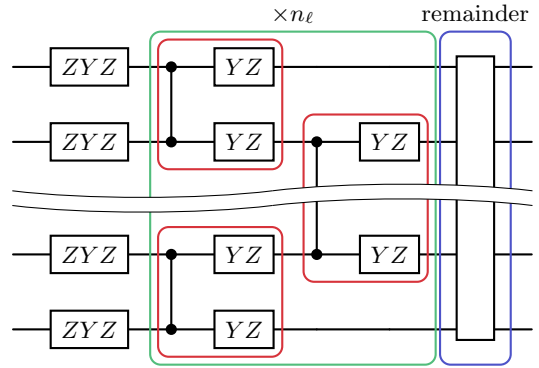
\begin{figure}[ht]
\begin{quantikz}
    & \gate{ZYZ} & \ctrl{1}\gategroup[2,steps=2,style={inner sep=0pt,rounded corners, color=Red}]{}\gategroup[5,steps=4,style={inner sep=3pt,rounded corners,color=WeakGreen}]{$\times n_\ell$} & \gate{Y Z} & & & \gate[5]{\ \ }\gategroup[5,steps=1,style={inner sep=3pt,rounded corners,color=XanaBlue}]{remainder} & \\
    & \gate{ZYZ} & \ctrl{0} & \gate{Y Z} & \ctrl{1}\gategroup[3,steps=2,style={inner sep=0pt,rounded corners, color=Red}]{} & \gate{Y Z} & & \\
    \wave & & & & & & & \\
    & \gate{ZYZ} & \ctrl{1}\gategroup[2,steps=2,style={inner sep=0pt,rounded corners, color=Red}]{} & \gate{Y Z} & \ctrl{-1} & \gate{Y Z} & & \\
    & \gate{ZYZ} & \ctrl{0} & \gate{Y Z} & & & &
\end{quantikz}
\caption{Brick wall circuit structure with the optimal number of parameters and two-qubit gates, which we claim to yield a universal circuit. The number of times, $n_\ell$, that the green box is repeated is given in \cref{eq:nell}. The ``remainder" block is constructed such that the minimal number of parameters is used. \cref{tab:SUN} shows the explicit structure for $n\in \{3, 4, 5\}$.}
\label{fig:sun_circuit}
\end{figure}

\begin{table*}[t]
\centering
\begin{tabular}{l|c|c|c}
$n$ & $3$ & $4$ & $5$ \\
\hline
circuit & 
\scalebox{0.6}{
\begin{quantikz}
    & \gate{ZYZ} & \ctrl{1}\block\gategroup[3,steps=4,style={inner sep=3pt,rounded corners,color=WeakGreen}]{$\times 6$} & \gate{Y Z} & & & \ctrl{1}\block\gategroup[3,steps=4,style={inner sep=3pt,rounded corners,color=XanaBlue}]{remainder} & \gate{Y Z} & & & \\
    & \gate{ZYZ} & \ctrl{0} & \gate{Y Z} & \ctrl{1}\block & \gate{Y Z} & \ctrl{0} & \gate{Y Z} & \ctrl{1} & \gate{Y} & \\
    & \gate{ZYZ} & & & \ctrl{0} & \gate{Y Z} & & & \ctrl{0} & \gate{Y} &
\end{quantikz}
}
&
\scalebox{0.6}{
\begin{quantikz}
    & \gate{ZYZ} & \ctrl{1}\block\gategroup[4,steps=4,style={inner sep=3pt,rounded corners,color=WeakGreen}]{$\times 20$} & \gate{Y Z} & & & \ctrl{1}\gategroup[4,steps=2,style={inner sep=3pt,rounded corners,color=XanaBlue}]{remainder}  & \gate{Y Z} & \\
    & \gate{ZYZ} & \ctrl{0} & \gate{Y Z} & \ctrl{1}\block & \gate{Y Z} & \ctrl{0} & \gate{Y} & \\
    & \gate{ZYZ} & \ctrl{1}\block & \gate{Y Z} & \ctrl{0} & \gate{Y Z} & & & \\
    & \gate{ZYZ} & \ctrl{0} & \gate{Y Z} & & & & &
\end{quantikz}
}
&
\scalebox{0.6}{
\begin{quantikz}
    & \gate{ZYZ} & \ctrl{1}\block\gategroup[5,steps=4,style={inner sep=3pt,rounded corners,color=WeakGreen}]{$\times 63$} & \gate{Y Z} & & & \\
    & \gate{ZYZ} & \ctrl{0} & \gate{Y Z} & \ctrl{1}\block & \gate{Y Z} & \\
    & \gate{ZYZ} & \ctrl{1}\block & \gate{Y Z} & \ctrl{0} & \gate{Y Z} & \\
    & \gate{ZYZ} & \ctrl{0} & \gate{Y Z} & \ctrl{1}\block & \gate{Y Z} & \\
    & \gate{ZYZ} & & & \ctrl{0} & \gate{Y Z} &
\end{quantikz}
}

\\
\# parameters & $3 \cdot 3 + 6 \cdot 4 \cdot 2 + 4 + 2 = 63$  & $4\cdot3 + 20 \cdot 4 \cdot 3 + 3 = 255$ & $5\cdot3 + 63 \cdot 4 \cdot 4 = 1023$ \\
\# CZs & $6 \cdot 2 + 2 = 14$ & $20\cdot3 + 1 = 61$ & $63\cdot4 = 252$ \\
\end{tabular}
\caption{Brick wall circuits for $SU(2^n)$ on $n=3, 4, 5$ qubits with optimal parameter count ($4^n-1$) and number of $\cz$ gates ($\left\lceil\frac{1}{4}(4^n-3n-1)\right\rceil$). We conjecture these circuits to be universal and provide numerical evidence supporting this claim. A constructive rule for these circuits is to stack bricks (red) into $n_\ell$ complete layers (green) and a remainder layer (blue), which may contain an incomplete brick; also see \cref{fig:sun_circuit}.}
\label{tab:SUN}
\end{table*}
Our main contribution consists of three families of circuits, for $SU(2^n)$, $SO(2^n)$, and $Sp^\ast(2^n)$, respectively, that are optimal in both parameter and two-qubit-gate count, and which we conjecture to be universal. 
We begin with $SU(2^n)$ and briefly present the other two cases, leaving details to \cref{app:SON,app:SpN}, respectively.
For $SU(2^n)$, a universal circuit with optimal parameter and $\cz$ gate count is given in \cref{fig:sun_circuit} (note that this circuit is one of many possibilities, it just happens to have the particularly regular brick wall structure).
The ansatz starts with a layer of $SU(2)$ gates on each qubit, here parametrized by an Euler decomposition in terms of $R_Z(\alpha) R_Y(\beta) R_Z(\gamma)$. We abbreviate this as a $Z Y Z$ box. The main body of the ansatz consists of red blocks, or bricks, each composed of a single $\cz$ gate followed again by $SU(2)$ gates on each qubit. Because we can pull a $Z$ rotation through the $\cz$ and merge it with the previous rotation, each $SU(2)$ gate in the red block is reduced to just $R_Z(\alpha_j) R_Y(\beta_j)$\footnote{Keep in mind that the order of circuit diagrams and matrix multiplication are reversed.}. The red blocks are arranged in a brick wall layout and are overall repeated 
\begin{align}\label{eq:nell}
    n_\ell = \left\lfloor \frac{4^n - 1 - 3n}{4(n-1)}\right\rfloor
\end{align}
times. Here, $d=4^n - 1$ is the total number of parameters, $3n$ is the parameter count in the initial layer and $4(n-1)$ the number of parameters per red block times the number of red blocks per green block.
As we can see, this circuit is compatible with linear nearest-neighbour connectivity. 

The ``remainder" block is such that the number of parameters reaches the desired dimension of the group, i.e., $d=4^n-1$ for $SU(2^n)$. We show the explicit circuit structures for $n=3, 4, 5$ in \cref{tab:SUN}. A constructive rule that we find to work is to keep packing parameters per two-qubit gate as tightly as possible and simply stop when the number of parameters reaches the dimension of the group.

We use $\cz$ as the entangling gate as it is symmetric in terms of the qubits it is acting on, which was helpful for the discovery of these circuits detailed in \cref{sec:derivation_B}. If one prefers to work with $\cnot$ gates, we can transform $\cz_{ij} = H_j \cnot_{ij} H_j$ and then merge the resulting Hadamard gates with the surrounding single-qubit gates. This yields the same circuit structure as presented in \cref{fig:sun_circuit}, but simply exchanging $\cz$ with $\cnot$ gates.

Recall that the lower bound for the required number of two-qubit gates in an $n$-qubit circuit is given by~\cite{shende2003minimal}

\begin{equation}
\label{eq:lower_bound_2q_su}
    c \geq \left\lceil \frac{1}{4} \left(4^n - 1 - 3n \right)\right\rceil.
\end{equation}
This bound can be derived from simple dimension-counting considerations. As depicted in \cref{fig:sun_circuit}, we can only ever add $4$ parameters per two-qubit gate. Further, we can always start by performing arbitrary $SU(2)$ gates on each qubit, yielding $3$ free parameters per qubit. So for $c$ two-qubit gates, we have $3n + 4c$ parameters overall. This number needs to at least match the dimension of the group, which in the case of $SU(2^n)$ is $4^n -1$. Thus, we require $3n + 4c \geq 4^n - 1$ and arrive at the lower bound in \cref{eq:lower_bound_2q_su}. 
The circuits presented in \cref{tab:SUN} are optimal in both parameter and two-qubit gate counts. We conjecture them to be universal and provide strong numerical evidence for this in \cref{sec:numerical_results}.

We can generalize the bound in \cref{eq:lower_bound_2q_su} to other groups with dimension $d$ for $n$ qubits:
\begin{equation}\label{eq:lower_bound_2q_general}
    c \geq\left\lceil \frac{1}{n_\text{params/2q}} \left(d - n_\text{initial} \right)\right\rceil,
\end{equation}
where $n_\text{params/2q}$ is the number of parameters per two-qubit gate that we can add ($4$ in the case above) and $n_\text{initial}$ is the number of parameters in the initial layer that has no two-qubit gates ($3n$ in the case above).
These circuits can also be readily translated to so-called Pauli product rotations (PPRs) \cite{PPR,Litinski2019}, as detailed in ~\cref{app:pprs}.

For the orthogonal group $SO(2^n)$, we have $n_\text{params/2q}=2$, $d=\tfrac{1}{2}2^n(2^n-1)$, and $n_\text{initial}=n$, leading to the lower bound
\begin{equation}\label{eq:lower_bound_2q_so}
    c \geq\left\lceil \frac{1}{2} \left(\frac{1}{2}2^n(2^n-1) - n \right)\right\rceil;
\end{equation}
see \cref{app:SON} for details. The minimal circuit structure, which we conjecture to be universal, is a simple restriction of the $SU(2^n)$ brick wall to the reals:

\begin{center}
\begin{quantikz}
    & \gate{Y} & \ctrl{1}\gategroup[2,steps=2,style={inner sep=0pt,rounded corners, color=Red}]{}\gategroup[5,steps=4,style={inner sep=3pt,rounded corners,color=WeakGreen}]{$\times n_\ell$} & \gate{Y} & & & \gate[5]{\ \ }\gategroup[5,steps=1,style={inner sep=3pt,rounded corners,color=XanaBlue}]{remainder} & \\
    & \gate{Y} & \ctrl{0} & \gate{Y} & \ctrl{1}\gategroup[3,steps=2,style={inner sep=0pt,rounded corners, color=Red}]{} & \gate{Y} & & \\
    \wave & & & & & & & \\
    & \gate{Y} & \ctrl{1}\gategroup[2,steps=2,style={inner sep=0pt,rounded corners, color=Red}]{} & \gate{Y} & \ctrl{-1} & \gate{Y} & & \\
    & \gate{Y} & \ctrl{0} & \gate{Y} & & & &
\end{quantikz}
\end{center}

For the symplectic group $Sp^\ast(2^n)$, we have at most $n_\text{params/2q}=3$, $d=\tfrac{1}{2}2^n(2^n+1)$ and $n_\text{initial}=n+2$, leading to the lower bound
\begin{equation}\label{eq:lower_bound_2q_sp}
    c \geq\left\lceil \frac{1}{3} \left(\frac{1}{2}2^n(2^n+1) - (n+2) \right)\right\rceil;
\end{equation}
see \cref{app:SpN} for details.
The template for this group exclusively uses entanglers with the first qubit to achieve $n_\text{params/2q}=3$:

\hspace{-0.6cm}
\scalebox{0.73}{
\begin{quantikz}
& \gate{ZYZ} & \gate{iY} \block\gategroup[5,steps=7,style={inner sep=3pt,rounded corners,color=WeakGreen}]{$\times n_\ell$} & \gate{Z Y} & \gate{iY}\block[3] & \gate{Z Y} &  \ \ldots\ & \gate{iY}\block[5] & \gate{ZY} & \gate[5]{\ \ }\gategroup[5,steps=1,style={inner sep=3pt,rounded corners,color=XanaBlue}]{remainder} & \\
& \gate{Y} & \ctrl{-1} & \gate{Y} & & & \ \ldots\ & & & & \\
& \gate{Y} & & & \ctrl{-2} & \gate{Y} & \ \ldots\ & & & & \\
\wave & & & & & & & & & & \\
& \gate{Y}& & & & & \ \ldots\ & \ctrl{-4} & \gate{Y}& &
\end{quantikz}%
}

\section{Derivation}
\label{sec:derivation}
Here we will outline how we arrived at the universal three-qubit circuit template via an exhaustive search, which was feasible only due to the small qubit count. However, based on the regular structure of the circuit, we then generalized the circuit to higher qubit counts and other groups, relying on numerical verification tools for universality.
We believe that sharing our thought process may be instructive for similar future work and hope that the reader finds it valuable.
The sobering key realization for our approach is that the space of circuits with 14 entangling gates (after filtering and reduction via symmetries) is small enough to simply search it exhaustively and identify candidates for universality by testing necessary conditions on each ansatz.

The established choice in unitary synthesis techniques for the static entangler is the $\cnot$ gate.
However, the symmetry of the $\cz$ gate makes it a nicer gate to work with, already reducing the search space size.
We will proceed in the following steps:

\begin{enumerate}
    \item[\ref{sec:ansatz_characterization}] Characterize the search space of all $SU(2^n)$ circuits with a given $\cz$ count.
    \item[\ref{sec:derivation_B}] Reduce the size of the search space with equivalence transformations, or by discarding less favourable solutions a priori.
    \item[\ref{sec:derivation_rank_test}] Numerically check a necessary condition for each ansatz in the reduced search space.
    \item[\ref{sec:selecting_candidates}] Select preferred candidate(s), minimize their parameter count, and test additional necessary conditions.
\end{enumerate}

The full workflow we used to discover the presented universal circuits is available in~\cite{our_repo}.

\subsection{Ansatz characterization}\label{sec:ansatz_characterization}

A fully general three-qubit ansatz with a given $\cz$ count $c$ is fully specified by $c$ pairs of qubits on which the $\cz$ gates act, without ordering \textit{within} the pairs.
Such an ansatz begins with an arbitrary single-qubit rotation on all three qubits, and then executes one two-qubit block, or brick, at a time, consisting of a $\cz$ gate followed by an $R_Y$ and an $R_Z$ rotation on both qubits.
This ansatz structure using two-qubit bricks is also used, e.g., in~\cite{Rakyta-Zimboras}, and forms the basis of the lower bound for $c$ from~\cite{Shende-Markov-Bullock,shende2003minimal}; also see \cref{sec:main_results}.
For $c$ entangling gates, the circuit contains $9+4c$ rotation angles---or parameters.

\subsection{Reducing the search space}\label{sec:derivation_B}

Our method to find a universal ansatz of the above form is rather simple: we essentially try all possibilities. Each $\cz$ on $n$ qubits is characterized by one of $\tfrac12n(n-1)$ possible unordered pairs of qubit indices, giving rise to $(\tfrac12(n^2-n))^c$ distinct circuits with $c$ $\cz$ gates.
For our original target of finding an $n=3$ qubit circuit with $c=14$ $\cz$s, this results in $3^{14}\approx 4.8\cdot 10^{6}$ different circuits.
While this search space certainly becomes unreasonably large for qubit counts above three, we can reduce it to an easily-managed size for $n=3, c=14$, using the following steps.

\begin{enumerate}
    \item Only consider nearest-neighbour entangling gates,~i.e., those acting on the qubit pairs $\{0, 1\}$ and $\{1, 2\}$ but not on $\{0, 2\}$. This is not an equivalence transformation of the search space, but we simply employ some optimism and hope that a valid decomposition will exist with these connectivity constraints. This reduces the number of distinct entanglers from $\tfrac12 n(n-1)$ to $n-1$,~i.e., from $3$ to $2$.
    \item Each specification that contains a sequence of four identical qubit pairs can be neglected. This is because such a sequence can be merged into a general two-qubit operation with three $\cz$ gates, lowering the total $\cz$ count from $c=14$ to $c=13$, which is below the theoretical lower bound and thus can't describe a universal ansatz.
\end{enumerate}

At each step, the number of ansatz specifications to test is

\begin{align}
    3^{14}
    \overset{\text{1.}}{\longrightarrow} 2^{14}
    \overset{\text{2.}}{\longrightarrow} 6272.
\end{align}
We thus reduced the search space by a factor of about $760$.
Even though it might be considered an advantage for an ansatz to only require the weaker connectivity imposed in step 1, it is feasible to skip this step and still search the resulting space (containing $3.5\cdot 10^6$ ans\"{a}tze) exhaustively, in a few hours instead of a minute.

There are additional reductions one could perform, which did not seem necessary for three qubits but might be useful for larger qubit counts or other generalizations.
For example, reversing the sequence of qubit pairs maintains universality, as do permutations of the qubit indices\footnote{If the connectivity is restricted, as is done in step 1 above, only some permutations maintain this connectivity.}.

\subsection{Necessary condition: Jacobian rank}
\label{sec:derivation_rank_test}
For each ansatz in the search space, we run a test based on a necessary condition for the rank of its Jacobian matrix.
We randomly sample $n_\text{initial}+ n_\text{params/2q}\ c=9+4c=65$ parameters $\theta$ and compute the Jacobian $J_F(\theta)\in\C^{65\times 8\times 8}$ of the matrix function $F: \R^{65}\to \C^{8\times 8}$ of the ansatz at $\theta$.
In order for the ansatz to be universal, the resulting $65$ matrices with shape $8\times 8$ must span the special unitary algebra, $\mathfrak{su}(8)$, almost everywhere across the parameter space.
This can be checked by verifying that the rank of the Jacobian is $\operatorname{dim}(\mathfrak{su}(8))=63$.
If this is the case, we know that locally around $F(\theta)$ the ansatz can produce any other unitary, as it spans the full tangent space of $SU(8)$ at $F(\theta)$. A similar technique is explored to produce piecewise parametrizations in~\cite{Funcke}.
Note that the number of parameters is larger than the maximal rank, indicating that the circuits are not parameter-optimal yet. We will fix this in \cref{sec:opt_angle_count}.

As we will show in \cref{sec:math}, $J_F$ will have full rank almost everywhere on its domain if it has full rank at \emph{any} $\theta$.
Even though the analytical probability of sampling a critical point $\theta^\ast$ of $F$ vanishes in this case, there is a nonzero probability due to finite numerical precision to sample parameters at which $J_F(\theta)$ is rank-deficient.
Our test thus will discard a perfectly valid ansatz if this happens.
For our purposes, this is not a problem as long as we find (sufficiently many) circuits that pass the test. If for some reason \textit{all} universal circuits are sought, the chance for such a false negative can be suppressed by repeating the test with new random parameters.

The test above requires the computation of the Jacobian $J_F(\theta)$, which can be performed conveniently using backpropagation. In our computational pipeline we used tools from PennyLane~\cite{pennylane} with JAX~\cite{jax} and its automatic differentiation and just-in-time (JIT) compilation features. This allows us to run the Jacobian rank test for all $6272$ circuits in just over a minute on a MacBook Air (M4).
We found $1084$ circuits with linear connectivity passing the Jacobian rank test~\cite{our_repo}.

\subsection{Selecting universal candidates}\label{sec:selecting_candidates}
None of the $1084$ solutions that we found to pass the Jacobian rank test use three subsequent bricks on the same qubit pair.
Out of the solutions, we select two with a particularly regular structure.
For the first candidate, we aim to group gates into largest-possible two-qubit blocks made up of multiple bricks, resulting in the following circuit:

\begin{center}
\resizebox{\linewidth}{!}{
\begin{quantikz}
    & \gate{ZYZ} & \ctrl{1}\block\gategroup[3,steps=5,style={inner sep=3pt,rounded corners,color=WeakGreen}]{$\times 3$} & \gate{YZ} & \emptyblock & & & \emptyblock & \emptyblock & \\
    & \gate{ZYZ} & \ctrl{0} & \gate{YZ} & & \emptyblock & \emptyblock & & & \\
    & \gate{ZYZ} & & & & & & & &
\end{quantikz}\!,
}
\end{center}
where we marked arbitrary $SU(2)$ operators in the initial layer with the Euler decomposition $ZYZ$ and the subsequent single-qubit gates consist of two single-parameter rotations as discussed before.
We show this candidate for expository purposes only, and for the second candidate we aim for an even more reduced structure. We will predominantly use this second candidate throughout this manuscript, generalizing it to higher qubit counts and the other classic Lie groups $SO(2^n)$ and $Sp^\ast(2^n)$:

\begin{center}
\scalebox{0.9}{
\begin{quantikz}
    & \gate{ZYZ} & \ctrl{1}\gategroup[3,steps=3,style={inner sep=3pt,rounded corners,color=WeakGreen}]{$\times 7$}\block & \gate{YZ} & & \\
    & \gate{ZYZ} & \ctrl{0} & \gate{YZ} & \emptyblock & \\
    & \gate{ZYZ} & & & & 
\end{quantikz}
}\!,
\end{center}
with the same types of single-qubit gates as before.

All other circuits presented in \cref{sec:main_results} are derived from this prototypical brick wall circuit, and verified with the rank test and further numerical evidence presented in \cref{sec:numerical_results}.

\subsection{Minimizing the parameter count}\label{sec:opt_angle_count}

Our characterization of $SU(2^n)$ circuits in \cref{sec:ansatz_characterization} yields three-qubit circuits with $9+4c=65$ parameters. However, we know that $63$ degrees of freedom are sufficient to parametrize $SU(8)$, so that we may aim to reduce the parameter count by two.
It turns out that for the two circuits we selected above, deleting any two gates at the end maintains the Jacobian rank, as long as it does not lead to further gate cancellations\footnote{And thus to parameter reductions that would push the parameter count below the theoretical minimum of $63$.}.
Examples for the resulting parameter-optimal circuits are given by 
\begin{center}
\scalebox{0.78}{
\begin{quantikz}
    & \gate{ZYZ} & \ctrl{1}\block\gategroup[3,steps=5,style={inner sep=3pt,rounded corners,color=WeakGreen}]{$\times 3$} & \gate{YZ} & \emptyblock & & &
    \emptyblock & \ctrl{1} & \gate{Y} & \\
    & \gate{ZYZ} & \ctrl{0} & \gate{YZ} & & \emptyblock & \emptyblock & & \ctrl{0} & \gate{Y} & \\
    & \gate{ZYZ} & & & & & & &&&
\end{quantikz}
}\!,
\end{center}
and by 
\begin{center}
\scalebox{0.78}{
\begin{quantikz}
    & \gate{ZYZ} & \ctrl{1}\gategroup[3,steps=3,style={inner sep=3pt,rounded corners,color=WeakGreen}]{$\times 6$}\block & \gate{YZ} & & \emptyblock\gategroup[3,steps=3,style={inner sep=3pt,rounded corners,color=XanaBlue}]{remainder}  & & & \\
    & \gate{ZYZ} & \ctrl{0} & \gate{YZ} & \emptyblock & & \ctrl{1} & \gate{Y} &\\
    & \gate{ZYZ} & & & & & \ctrl{0} & \gate{Y} &
\end{quantikz}
}.
\end{center}

\section{Numerical results}
\label{sec:numerical_results}

We have so far constructed circuits with optimal resources that we conjecture to be universal based on the Jacobian rank test described in \cref{sec:derivation_rank_test}. 
In the next section (\cref{sec:expressibility}), further evidence for universality is provided in the form of an expressibility measure that we compute numerically.

We then move on to using our circuits for quantum compilation in practice. To this end, unitary synthesis for random unitary matrices is performed as a proof of principle in \cref{sec:unitary_synthesis_typical}, and applied to vibronic and fast-forwardable Hamiltonians as relevant real-world applications in \cref{sec:applications}.

\subsection{Expressibility by Sim et al.}
\label{sec:expressibility}

\begin{figure}
    \centering
    \includegraphics[width=\linewidth]{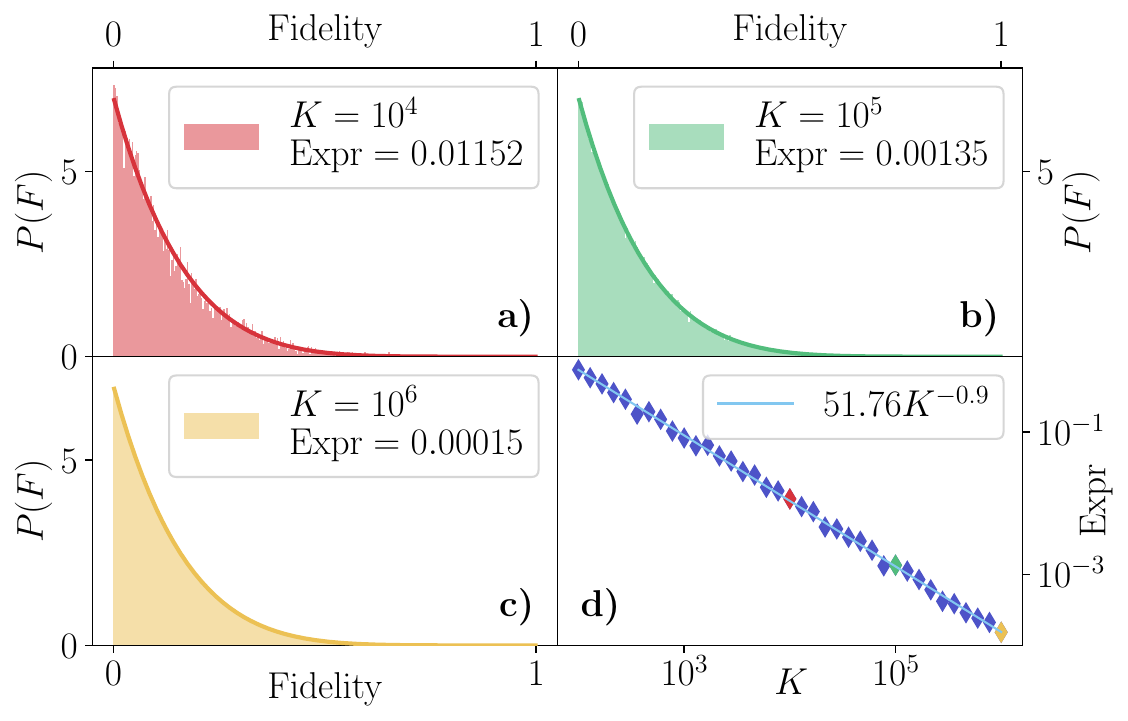}
    \caption{Numerical expressibility test from~\cite{Sim-Johnson-Aspuru-Guzik} for the parameter-optimal circuit on three qubits. 
    We randomly sample $K$ pairs of parameter vectors, compute the fidelities between the states prepared by the ansatz, and compare the resulting fidelity distribution to that of Haar random unitaries. The expressibility $\operatorname{Expr}$ of the ansatz is defined as the Kullback-Leibler divergence between them, with smaller values indicating larger expressibility.
    \textbf{a-c)} Fidelity distributions for the circuit (histogram) and for Haar random unitaries (solid line), for $K=10^4,10^5,10^6$, respectively.
    \textbf{d)} Expressibility $\operatorname{Expr}$ for $10^2\leq K\leq 10^6$ (markers). The numerical value depends on $K$, and we observe a steady convergence towards $0$ with scaling $\mathcal{O}(K^{-0.9})$ (solid line), which we interpret as an indication for universality. See \cref{fig:expressibility_app} for the cases $n=4, 5$.}
    \label{fig:expressibility}
\end{figure}

In~\cite{Sim-Johnson-Aspuru-Guzik}, an ansatz expressibility measure based on the Kullback-Leibler divergence was proposed. This measure is given by
\begin{align}
    \operatorname{Expr} = D_{\text{KL}}\left(P_\text{ansatz}(f)\| P_\text{Haar}(f)\right),
\end{align}
with $D_\text{KL}$ denoting the Kullback-Leibler divergence and $P_\text{ansatz}$ and $P_\text{Haar}$ denoting the probability distribution of the fidelities ($f$) for the ansatz and the Haar random distribution, respectively.
We can estimate $\operatorname{Expr}$ numerically for a finite sample count $K$ by sampling $K$ pairs of parameter vectors and estimating the distribution of fidelities $P_\text{ansatz}(F)$ from the empirical distribution that arises from the pairs of states prepared with these parameters.
The Haar fidelity distribution is known analytically to be $P_\text{Haar}(f) = (2^n-1)(1-f)^{2^n-2}$, which can be discretized to compute the divergence with respect to $P_\text{ansatz}(f)$ and thus $\operatorname{Expr}$.

If the measure satisfies $\operatorname{Expr}\to 0$ for $K\to\infty$, this indicates that the candidate circuit in question is as expressible as the Haar random distribution. Sim et al. leave it open to prove that a vanishing $\operatorname{Expr}$ implies \textit{equality} to the Haar distribution.
We run the numerical analysis for the second optimized circuit from \cref{sec:opt_angle_count} with $10^2\leq K\leq 10^6$ samples and a discretization into $300$ bins for the interval $[0, 1]$.
In \cref{fig:expressibility}, we show the obtained distributions for $K\in\{10^4, 10^5, 10^6\}$ and how $\operatorname{Expr}$ vanishes with increasing $K$.
We interpret this as further evidence for the universal expressibility of the ansatz.
The code for this test is available in~\cite{our_repo}.

\subsection{Unitary synthesis of typical matrices}
\label{sec:unitary_synthesis_typical}

In this subsection, we demonstrate the practical usage of our circuits as universal parametrizations of unitary matrices. To this end, we sample Haar random unitary matrices $U$ from $SU(2^n)$ and compile them to the circuit ansatz. This process of going from a unitary matrix to a circuit is typically referred to as unitary synthesis. Unfortunately, we do not have analytic formulas or linear-algebra-based constructions to obtain the parameters of the ansatz directly from the matrix $U$. We therefore resort to variationally fitting the parameters via gradient descent of the cost function

\begin{equation}\label{eq:cost_func}
    \mathcal{L}(\theta) := 1 - \frac{1}{2^n}\left|\text{tr}\left(U_\text{target}^\dagger V(\theta)\right) \right|.
\end{equation}

We start the optimization from random initializations, sampled from a normal distribution around $0$. Because this can sometimes get stuck in local minima, we restart the process if we do not reach the desired accuracy $\epsilon$ for $\mathcal{L}(\theta)$ within a user-set number of epochs, typically chosen heuristically for the size of the matrix. We then repeat this process until convergence is reached, compiling all target matrices successfully for the accuracy $\epsilon = 10^{-10}$ that we used in all applications reported here.

This unitary synthesis method is implemented as a plug-and-play function \texttt{unicirc.compile} in our Python package \pyth{unicirc}~\cite{our_repo}. The repository also contains code with which all figures of this paper can be readily reproduced. We use PennyLane~\cite{pennylane}, JAX~\cite{jax} and Optax~\cite{optax} for our implementations.

In \cref{fig:convergence} we show the cost function \cref{eq:cost_func} during the optimization of $10$ random target unitaries sampled from the unitary group, starting from $10$ random initializations (drawn from a normal distribution around $0$, i.e., \pyth{jax.random.normal} with default values). We use the \pyth{optax.lbfgs} optimizer with \pyth{memory_size=5*d} and \pyth{learning_rate=None}, where $d=4^n-1$ is the parameter count. For each target, at least one attempt converges, and we show all $10\times 10$ runs in the same plot for brevity (see~\cite{our_repo} for the individual plots for each target unitary). We observe that roughly half of the attempts succeed for the chosen convergence threshold of $\epsilon = 10^{-10}$. In particular, for smaller qubit numbers the optimization often gets stuck in local minima. It seems that these local minima issues improve with increasing qubit numbers, in terms of the accuracy they produce. This leads to the somewhat counterintuitive picture painted in \cref{fig:successprob}, showing that the success rate $\kappa$ for larger threshold values actually increases as we increase the number of qubits.

\begin{figure}
    \centering
    \includegraphics[width=\linewidth]{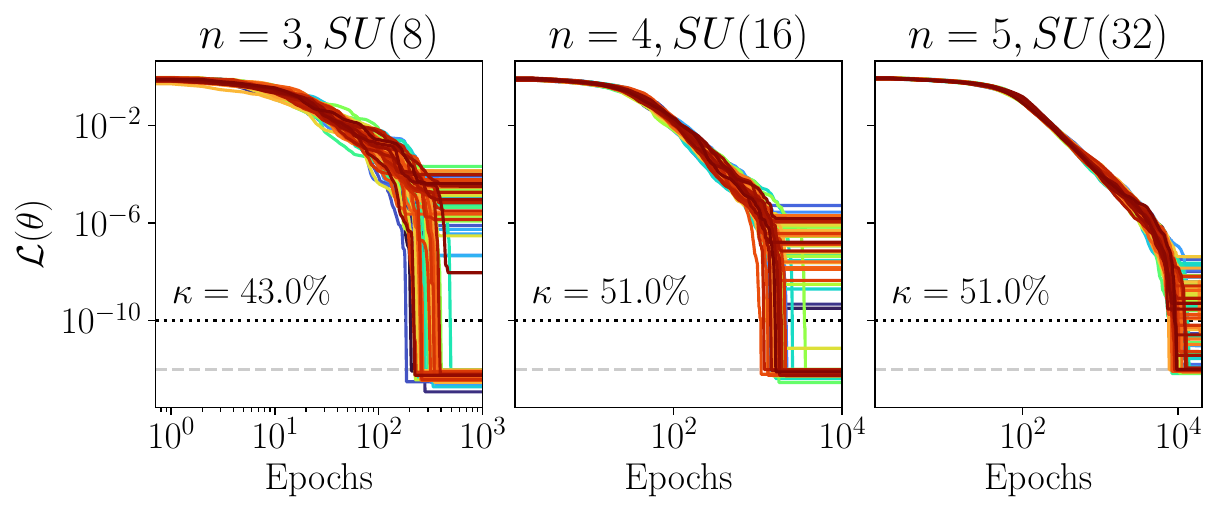}
    \caption{Compiling $10$ Haar random unitaries by starting variational optimization from $10$ random initial values each, sampled from a normal distribution with $\mu=0, \sigma=\tfrac15$. Each target unitary converges at least once within the $10$ trials to the threshold of $10^{-10}$ (dotted line). $\kappa$ in the lower left corner indicates the overall success rate over the $10\times10$ trials. The used target precision is $10^{-12}$ (dashed line), allowing for the more detailed analysis of $\kappa$ in \cref{fig:successprob}.}
    \label{fig:convergence}
\end{figure}

\begin{figure}
    \centering
    \includegraphics[width=\linewidth]{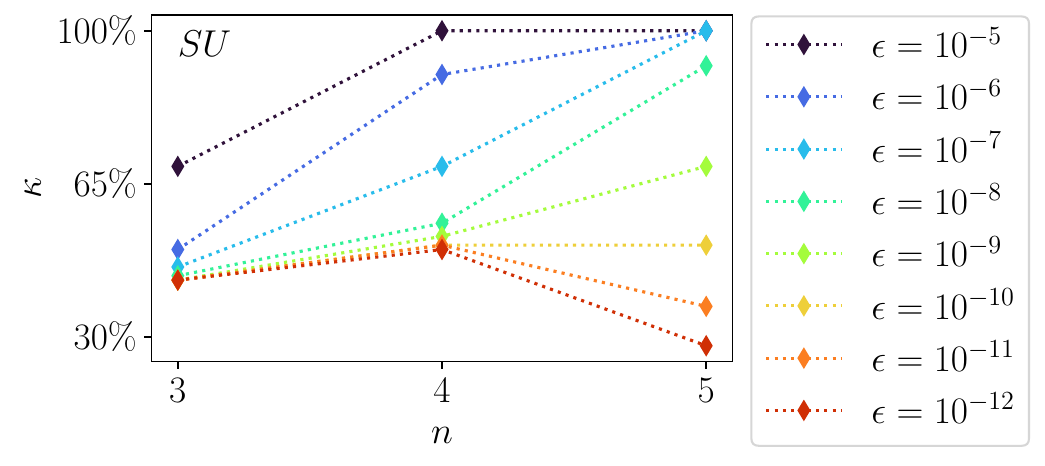}
    \caption{Success rates $\kappa$ from the $100$ compilation attempts in \cref{fig:convergence} for different threshold values $\epsilon$. Curiously, for threshold values above $10^{-10}$, the success probability increases with the qubit number. One potential explanation is that the local minima encountered for higher qubit counts provide approximations with higher accuracy.}
    \label{fig:successprob}
\end{figure}

From a practical perspective, the different compilation attempts starting from distinct initial parameters are not crucial. Rather, we are interested in a single practically relevant metric that combines the success rate, the required optimization epochs, and the computational cost per epoch; for this, we choose the time required to synthesize a given unitary, including all failed attempts that converged to local minima and had to be restarted.
To evaluate this metric, we sampled $1000$ ($500$, $100$) random unitaries for $n=3$ ($n=4$, $n=5$) and ran repeated optimizations for each of them until the first convergence occurred.
The resulting compilation times are shown in \cref{fig:compilation_times}, the cost during optimization is shown in \cref{fig:compilation_costs}.
For the special orthogonal group $SO(2^n)$ and the symplectic unitary group $Sp^\ast(2^n)$, we show the analogous experiments in \cref{fig:convergence_success_rates_so,fig:convergence_success_rates_sp}. The compilation is similarly successful, with the exception of some target matrices in $SO(8)$, that were not compiled successfully.

\begin{figure}  
    \centering
    \includegraphics[width=\linewidth]{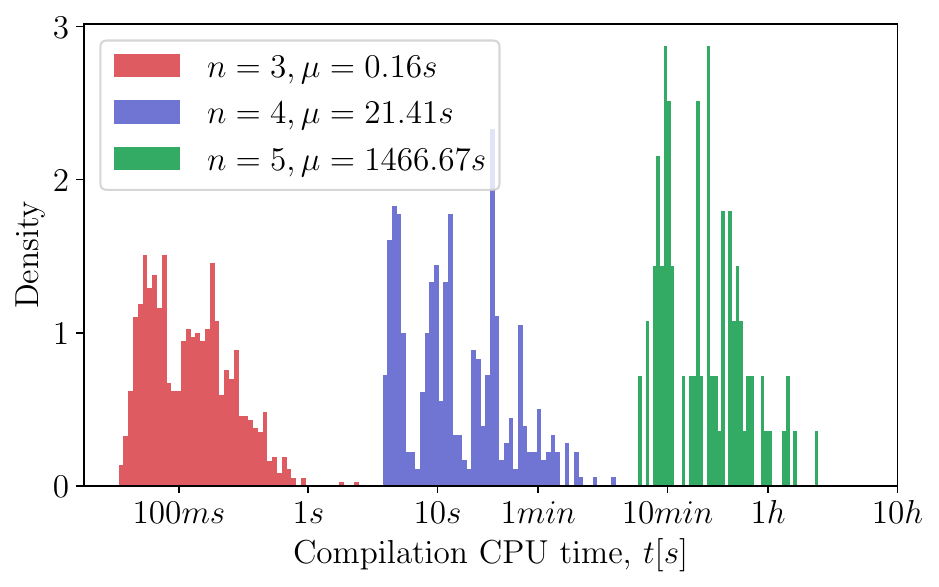}
    \caption{Total compilation CPU time for $1000$ ($500$, $100$) random target unitaries on $n=3$ ($n=4$, $n=5$) qubits. Variational optimizations are run with randomized initial parameters until the first run that achieves a cost below $\epsilon=10^{-10}$. We also report the average time to compilation $\mu$, but note the large variance between different targets.}
    \label{fig:compilation_times}
\end{figure}

\begin{figure*}
    \centering
    \includegraphics[width=\linewidth]{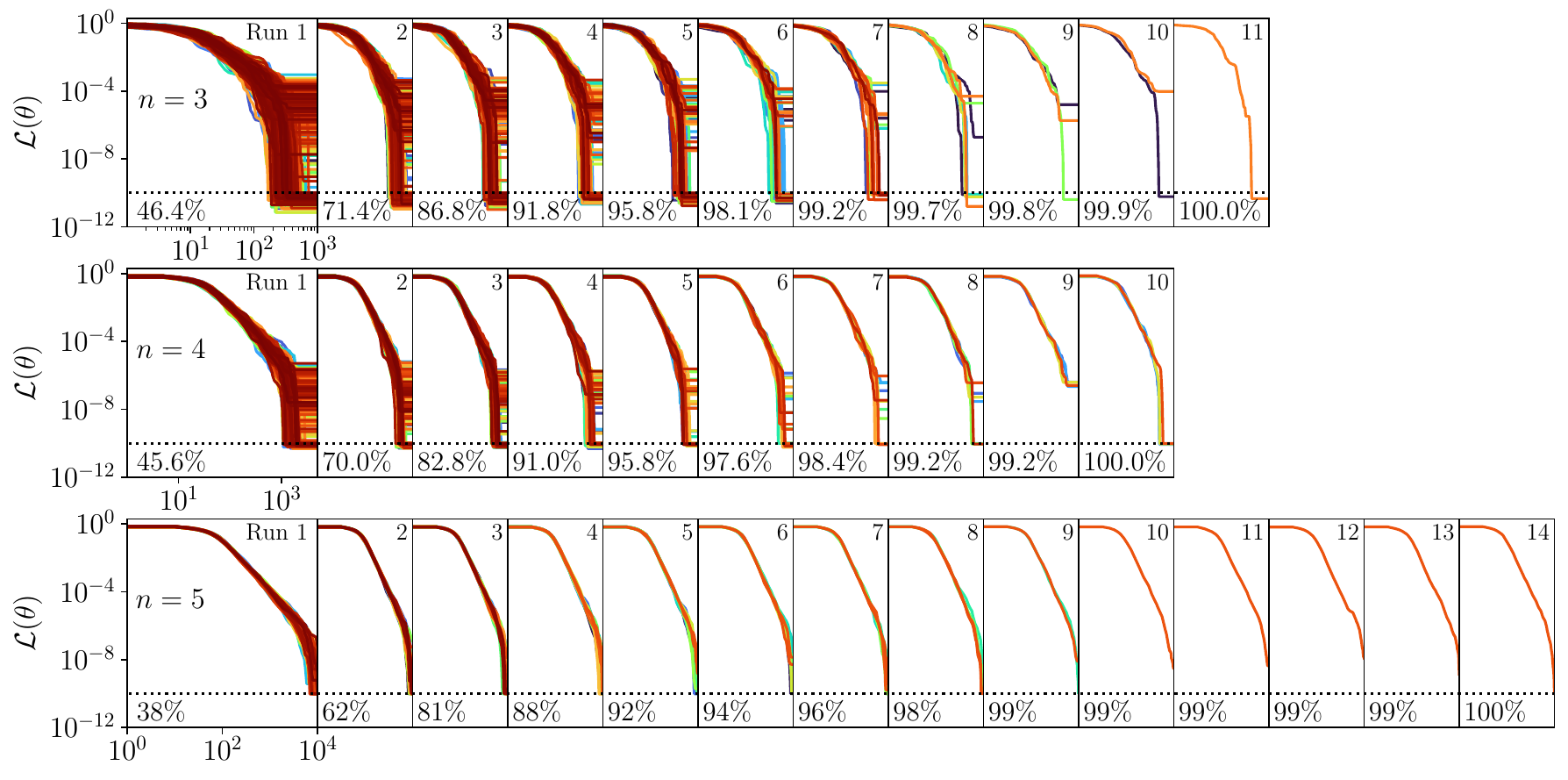}
    \caption{Cost during the variational optimization to compile $1000$ ($500$, $100$) random target unitaries for $n=3$, top ($n=4$, middle, $n=5$, bottom). For each target, randomly initialized optimizations are run until the first convergence to the target tolerance $\epsilon=10^{-10}$. The portion of converged runs is reported in the bottom of each panel. The axes for all panels per row match, the first panel is displayed wider for visibility purposes only.}
    \label{fig:compilation_costs}
\end{figure*}

\subsection{Applications}
\label{sec:applications}
The lower bounds for two-qubit gates and parameters are referring to typical representatives of $SU(2^n)$. There are, of course, unitary operators that can be realized by far fewer elementary gates. These unitaries form a measure zero subset of $SU(2^n)$ and are considered untypical. This terminology is slightly misleading, as most gates that we use to build quantum algorithms are untypical: $\text{CNOT}, \text{Toffoli}, R_X, R_Y, R_Z$ etc. are all untypical. The same goes for subroutines like quantum adders~\cite{Gidney2018halving} or phase gadgets~\cite{Cowtan2020}.

In all these scenarios, explicit decomposition strategies are known and usually better than anything we can achieve with generic circuit synthesis.
The generic circuit synthesis method \texttt{unicirc.compile} provided here is most useful in scenarios with $n=3, 4, 5$ qubits and unitary matrices that either are typical or are untypical, but have no known constructive decompositions.

In the case of such untypical unitaries where constructive decompositions are unknown or hard to obtain, we can use an adaptive version of our circuit compilation routine that builds up a partial version of the ansatz that iteratively adds bricks (red boxes) in \cref{fig:sun_circuit} until convergence is reached. We empirically observe that this will happen at the very latest when we reach the full circuit ansatz, which is in line with the conjectured universality. When compilation succeeds before reaching the full ansatz, we consider the matrix untypical. To account for local minima that the optimization can get stuck in, we may repeat the compilation at each stage of the adaptive compilation $N_\text{reps}$ times. This method is implemented as \pyth{unicirc.compile_adapt} in~\cite{our_repo}.

\subsubsection{Vibronic Hamiltonians}

Trotter-based time evolution greatly benefits from finding small diagonalizable fragments of the Hamiltonian, $H^\text{frag} = U D U^\dagger$, leading to terms of the form

\begin{equation}\label{eq:numerical_fast_forwarding}
    e^{-i t H^\text{frag}} = U e^{-i t D} U^\dagger.
\end{equation}
The diagonal part $e^{-i t D}$ can be readily and efficiently decomposed (see,~e.g.,~\cite{PennyLane-diagonal-unitary-decomp}), but the diagonalizing unitary $U$ still needs to be synthesized. Our method lets us decompose such unitaries with the minimal number of rotation gates and two-qubit gates (for typical unitaries). We explicitly compile the diagonalizing gates of the fragments of Anthracene (6 states, 3 qubits) and Pentacene (16 states, 4 qubits) investigated in~\cite{motlagh2025vibronicdynamics,XPrize} and report the resources in \cref{tab:vibronic}. We filter the resulting angles for values of $\tfrac\pi2 k$ with $k\in \Z$ (i.e., including zeros and Clifford angles), and report only the number of non-Clifford rotation angles. In the case of Anthracene there are only $6$ states that are embedded in a $n=3$ qubit unitary. Due to this incompleteness, we expect the resulting unitaries to be untypical, and indeed empirically find savings in the number of non-Clifford rotations compared to the full ansatz and the bound for typical unitaries (see \cref{tab:vibronic}). In the case of Pentacene, we use all $16$ states available to the $n=4$ qubit unitary and find that the majority of unitaries are close to being typical and require $240$ non-Clifford angles on average, saving only about $6\%$ compared to the bound of $255$.

\begin{table}
\centering
\begin{tabular}{l|c|c|c|c|c}
Molecule & $N_\text{states}$ & n & bound (typical) & full & adaptive  \\
\hline
Anthracene & 6 & 3 & 63 & $55 \pm 8$ & $41 \pm 13$  \\
Pentacene & 16 & 4 & 255 & $253 \pm 5$ & $240 \pm 15$
\end{tabular}
\caption{Guaranteed non-Clifford rotation count and empirical average using our synthesis tools (\texttt{compile}) and its adaptive version (\pyth{compile_adapt}). The savings in Anthracene are likely due to the fact that we are embedding $N_\text{states}=6$ states in $2^n = 8$ dimensions. The savings for Pentacene, where the $2^n=16$ states span the full Hilbert space, are more modest, as one would expect.} 
\label{tab:vibronic}
\end{table}

\subsubsection{Fast-forwardable Hamiltonians}

In the previous section we looked at small Hamiltonian fragments that can be numerically diagonalized, and then performed unitary synthesis on the diagonalizing unitaries in \cref{eq:numerical_fast_forwarding}. This procedure is called fast-forwarding a Hamiltonian (fragment). There is a different flavour of fast-forwarding using horizontal Cartan decompositions from Lie theory~\cite{Kokcu2022}. This requires finding a suitable horizontal Cartan decomposition, which is not always easy to perform.

As in \cref{fig:adapt-Ising}, we test our adaptive method on the Ising model and find that our method matches or beats the non-Clifford resources compared to fast-forwarding the Hamiltonian with a horizontal Cartan decomposition. We defer the explicit horizontal Cartan decomposition to \cref{sec:fast-forward}. The adaptive compiler mostly matches the horizontal Cartan decomposition and occasionally yields modest savings. Thus the main advantage of \pyth{compile_adapt} is the convenience of using it as a black-box compiler and not having to worry about finding a suitable horizontal Cartan decomposition, which can be a tedious task.

\begin{figure}
    \centering
    \includegraphics[width=0.9\linewidth]{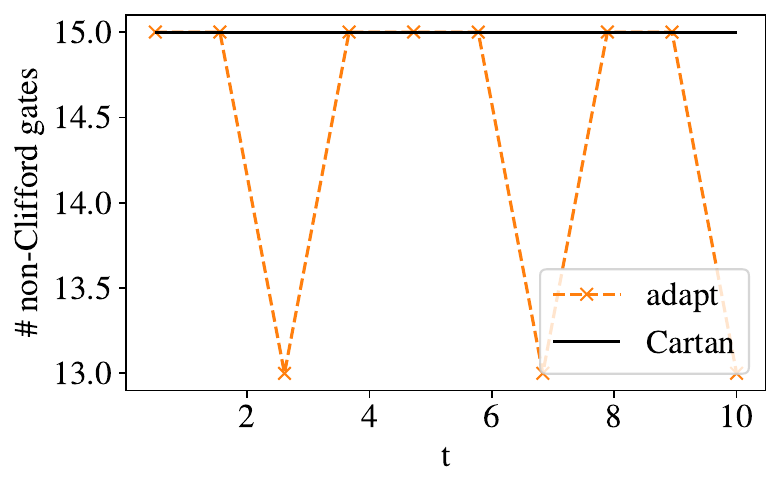}
    \caption{Comparing the number of non-Clifford rotation gates when we decompose the Ising Hamiltonian using \pyth{compile_adapt} (``adapt") and the horizontal Cartan decomposition (``Cartan"). \pyth{compile_adapt} matches or beats the horizontal Cartan decomposition without any physical insights on the Hamiltonian fragments.}
    \label{fig:adapt-Ising}
\end{figure}

\section{Mathematical considerations}\label{sec:math}

In this section we present some statements that we believe to be useful for a proof of universality of the presented circuit families. Unfortunately, we do not have a complete proof, because we are not able to prove \cref{conj:no_walls}. Additionally, due to numerical limitations, we only know for $n\leq 5$ that the Jacobian matrix of our circuits has full rank for at least one set of parameters.
We present the statements here in informal versions and defer more technical versions, together with proofs, to \cref{app:math_details}.

For simplicity, we think of parametrized quantum circuits (PQCs) as standard rotation gates $R_X$, $R_Y$, and $R_Z$, combined with static entangling gates like $\cnot$ or $\cz$, and we assume parameters to feed into the rotation gates without any further preprocessing. 
Due to periodicity of the rotation gates, such a PQC maps from a parameter space $T^d$, the $d$-dimensional torus, into some $d$-dimensional target group $\groupG$,
\begin{align}
    F:\ \ T^d &\to \groupG,\quad
    \theta \mapsto F(\theta)\in \C^{(2^n\times 2^n)},
\end{align}
where we express both the output of the PQC $F$ and the target group via complex matrices, using the fundamental representations of the classical groups $SU(2^n)$, $SO(2^n)$ and $Sp^\ast(2^n)$. Note that we already assumed the number of parameters to match the dimension $d$ of $\groupG$.
An illustrative consequence of our choice $T^d$ as parameter space is the following lemma:

\begin{restatable}{lemma}{densitysurjectivity}\label{lemma:density_surjectivity}
    If the image of a PQC is dense in $\groupG$, the image is the full group.
\end{restatable}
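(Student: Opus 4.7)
The plan is to exploit compactness: the parameter domain $T^d$ is a compact topological space (a product of circles), and the target group $\mathcal{G}$, being one of the classical matrix Lie groups $SU(2^n)$, $SO(2^n)$, or $Sp^\ast(2^n)$, is a Hausdorff space under the usual matrix topology. A continuous image of a compact set into a Hausdorff space is compact, and compact subsets of Hausdorff spaces are closed. Hence $F(T^d)$ is a closed subset of $\mathcal{G}$, and a closed dense subset coincides with the whole space.

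First, I would verify that $F$ is genuinely continuous on $T^d$. Under the assumption of the lemma, every parameter feeds into a rotation gate $R_X$, $R_Y$, or $R_Z$ without preprocessing, and all entangling gates are static. Matrix multiplication and the map $\theta \mapsto e^{-i\theta P/2}$ for a Pauli $P$ are smooth, and the $2\pi$-periodicity (up to a global phase that is irrelevant for elements of $SU(2^n)$ and its subgroups, or that can be absorbed by passing to $4\pi$-periodicity) means $F$ descends to a well-defined continuous map from the torus $T^d$ rather than from $\mathbb{R}^d$.

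Next, I would invoke the standard topology facts: $T^d$ is compact (as a finite product of compact circles), so $F(T^d)$ is compact; inside the Hausdorff space $\mathcal{G} \subseteq \mathbb{C}^{2^n \times 2^n}$, compactness implies closedness. Combining this with the hypothesis that $F(T^d)$ is dense in $\mathcal{G}$ yields $F(T^d) = \overline{F(T^d)} = \mathcal{G}$, which is the desired conclusion.

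The only subtle point, and the one I would flag as the main item to justify carefully in the formal write-up, is the choice of parameter domain. If one were to allow affine preprocessing of the parameters or parameter-sharing, the effective domain would no longer be strictly the torus; and if the rotation generators were not half-Paulis, the periodicity argument would have to be reexamined. Within the restricted PQC class stipulated at the beginning of \cref{sec:math}, however, these issues do not arise, and the argument above goes through directly with no substantive obstacle.
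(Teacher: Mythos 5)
Your argument is correct and is essentially identical to the paper's own proof: both rely on $T^d$ being compact, $F$ being continuous, $\groupG$ being Hausdorff, and the fact that a compact subset of a Hausdorff space is closed, so that a dense closed image must be all of $\groupG$. The extra care you take over periodicity and the absence of parameter preprocessing is a reasonable addition but does not change the route.
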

This Lemma tells us that the space of unitaries that an ansatz can reach is not punctured, which would break surjectivity. Instead, it implies that if a PQC is not surjective, it must fail to reach \textit{full-dimensional patches} of $\groupG$.
We can further illustrate the way this might happen by considering the Jacobian $J_F$ of the map $F$, and in particular its rank, which encodes the number of directions parametrized by $F$ on the tangent spaces of $\groupG$.
The \textit{regular set} of $F$ consists of the points in $T^d$ at which $J_F$ has full rank $d$,~i.e., those points around which $F$ can parametrize changes into all directions on $
\groupG$. The \textit{critical set} $\mathcal{C}$ is its complement and contains points at which the movement of $F(\theta)$ is restricted to a lower-dimensional submanifold; see \cref{fig:no_walls}a).
With this, we can state the following characterization:

\begin{restatable}{lemma}{fullrankanywhereae}\label{lemma:full_rank_anywhere_ae}
    If the Jacobian of a PQC has full rank anywhere, it has full rank almost everywhere. That is, its critical set $\mathcal{C}$ has measure zero.
\end{restatable}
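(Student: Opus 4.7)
The plan is to reduce the full-rank condition to the non-vanishing of a single, globally defined real-analytic function on $T^d$, and then invoke the standard fact that the zero set of such a function has Lebesgue measure zero.

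First, I would exploit that $\groupG$ is a Lie group and hence parallelizable: the Maurer--Cartan form $U \mapsto U^{-1}\,dU$ identifies every tangent space $T_U \groupG$ with the Lie algebra $\mathfrak{g}$. Pulling back through $F$ produces $\mathfrak{g}$-valued vectors
\begin{equation*}
v_j(\theta) = F(\theta)^{-1}\,\partial_{\theta_j} F(\theta), \qquad j=1,\dots,d,
\end{equation*}
whose span in $\mathfrak{g}$ has dimension equal to $\operatorname{rank} J_F(\theta)$. Since $\dim \mathfrak{g} = d$, fixing any basis of $\mathfrak{g}$ lets us assemble a $d\times d$ coefficient matrix $M(\theta)$, and $J_F$ has full rank at $\theta$ iff $\det M(\theta) \neq 0$.

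Next, I would observe that $\det M(\theta)$ is real-analytic on all of $T^d$. Because $F(\theta)$ is a product of rotation gates $R_{X/Y/Z}(\theta_j) = \exp(-i\theta_j P/2)$ interlaced with fixed entangling gates, its matrix entries are trigonometric polynomials in the $\theta_j$; the same holds for $F(\theta)^{-1}$ and for every $\partial_{\theta_j} F(\theta)$. Hence each $v_j$ depends real-analytically on $\theta$, and so does $\det M$. The hypothesis that $J_F$ has full rank at some $\theta^\ast$ now reads $\det M(\theta^\ast) \neq 0$, so $\det M$ is not identically zero on the connected manifold $T^d$.

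Finally, I would invoke the classical fact that the vanishing locus of a nonzero real-analytic function on a connected real-analytic manifold has Lebesgue measure zero. Applied to $\det M$ on $T^d$, this yields $|\mathcal{C}| = |\{\theta : \det M(\theta) = 0\}| = 0$, which is exactly the claim. The only nontrivial ingredient is this last measure-zero fact, which I would either cite or sketch by induction on the dimension, combining Fubini's theorem with the one-variable observation that a nonzero real-analytic function has isolated zeros. Everything else is bookkeeping, once the left-invariant trivialization has turned the intrinsic rank condition into the non-vanishing of a single global real-analytic function.
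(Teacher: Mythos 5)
Your proposal is correct and follows essentially the same route as the paper: both reduce full rank of $J_F$ to the non-vanishing of a single real-analytic $d\times d$ determinant (you via the coefficient matrix of the pulled-back tangent vectors $F^{-1}\partial_{\theta_j}F$ in a basis of $\mathfrak{g}$, the paper via the Gram matrix of the effective generators, which is an equivalent bookkeeping choice), and then both invoke the fact that a real-analytic function on $T^d$ that is not identically zero vanishes only on a measure-zero set. The arguments are interchangeable, so no further comparison is needed.
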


Given our characterization above, this lemma tells us that a full-dimensional patch of the group $\groupG$ is accessible to $F(\theta)$ at almost every point, as soon as this is the case at \textit{any} position in parameter space.
From an algebraic perspective, this implies that if the dynamical Lie algebra (DLA) of the PQC is the full Lie algebra $\mathfrak{g}$ of $\groupG$, the PQC will parametrize universal dynamics almost everywhere on its domain, and there are no $d$-dimensional patches of the parameter space on which the circuit encodes lower-dimensional dynamics.
We numerically verified for $n\leq 5$ that our circuits for $SU(2^n)$, $SO(2^n)$ and $Sp^\ast(2^n)$ have full Jacobian rank somewhere in $T^d$~\cite{our_repo}.

\begin{figure}
    \centering
    \def\svgwidth{\linewidth}
    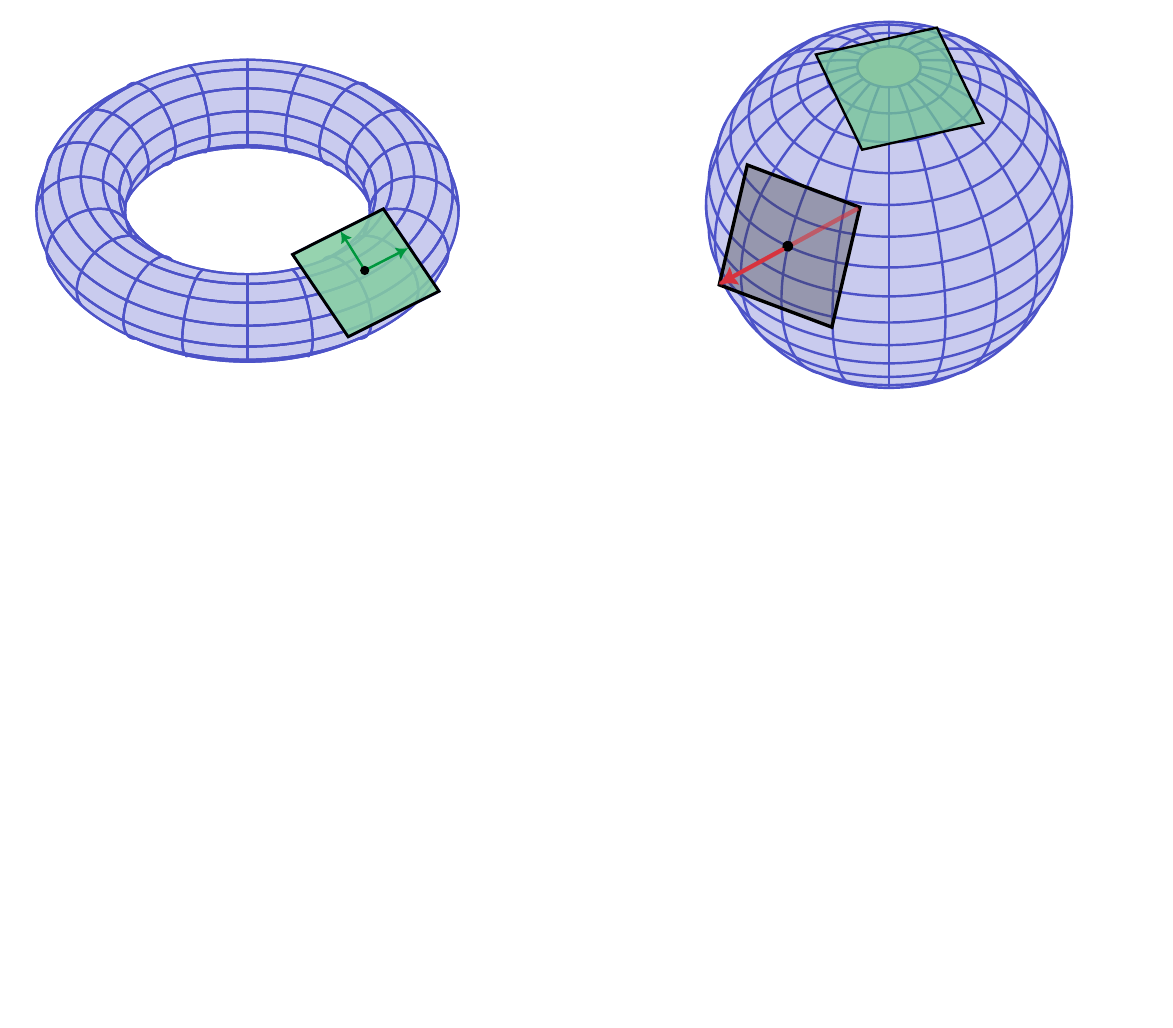
    \caption{Relevant geometrical scenarios for the surjectivity of a PQC matrix function $F$ from the torus $T^d$ into a $d$-dimensional Lie group $\groupG$, with $d=2$ in the figure. \textbf{a)} Regular points $\theta_r\in T^d$ of $F$ are such that its Jacobian $J_F(\theta_r)$ has (full) rank $d$, so that its image spans the entire tangent space $T_{U_r}\groupG$ at $U_r=F(\theta_r)$ (green). 
    By contrast, at critical points $\theta_c\in\mathcal{C}$, $J_F(\theta_c)$ is rank-deficient, so that its image is a proper subspace of $T_{U_c}\groupG$ at $U_c=F(\theta_c)$ (red).
    \Cref{lemma:full_rank_anywhere_ae} tells us that if there is a single regular point, almost all points in $T^d$ are regular points.
    \textbf{b)} The critical set $\mathcal{C}$ of $F$, while having measure zero in the case of \cref{lemma:full_rank_anywhere_ae}, might be $(d-1)$-dimensional and could be mapped to a $(d-1)$-dimensional submanifold $F(\mathcal{C})\subset\groupG$ that walls off a $d$-dimensional region of $\groupG$, preventing surjectivity of $F$. \Cref{conj:no_walls} states that this scenario actually does not occur for PQCs. \textbf{c)} Instead, it conjectures that $(d-1)$-dimensional critical sets are only ever mapped to at most $(d-2)$-dimensional subspaces of $\groupG$.
    }
    \label{fig:no_walls}
\end{figure}

\Cref{lemma:full_rank_anywhere_ae} also elucidates the way in which a PQC might fail to be surjective onto all of $\groupG$, which we illustrate in \cref{fig:no_walls}b); 
the critical set $\mathcal{C}\subset T^d$ is restricted to not have full dimension.
This implies that the critical values $F(\mathcal{C})\subset\groupG$ do not have full dimension either. 
However, they can form a $(d-1)$-dimensional boundary around the $d$-dimensional patch that can be reached by $F$, ``walling off" other $d$-dimensional regions of the target group $\groupG$ (\cref{fig:no_walls}b)).
One path to showing surjectivity of a PQC thus would be to show that the critical values actually form a subspace of $\groupG$ that is even lower-dimensional, namely at most $(d-2)$-dimensional. This would prevent the critical values from separating the image of the PQC, $F(T^d)$, from full-dimensional subspaces of the target group (\cref{fig:no_walls}c)).
Based on numerical experiments and lower-dimensional examples, we do conjecture this to be true:

\begin{conjecture}\label{conj:no_walls}
    The critical values $F(\mathcal{C})$ of a PQC that has full rank almost everywhere in $T^d$ form an at most $(d-2)$-dimensional subset of $\groupG$.
\end{conjecture}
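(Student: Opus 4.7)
The plan is to combine the analytic Sard theorem with a refined dimension analysis that leverages the specific structure of PQCs. Since $F$ is a trigonometric polynomial in the parameters, the critical set $\mathcal{C}\subset T^d$ is a real-analytic subvariety, and by \cref{lemma:full_rank_anywhere_ae} it has dimension at most $d-1$. Analytic Sard then gives that $F(\mathcal{C})$ has measure zero in $\groupG$, but this is strictly weaker than what we need; the conjecture demands the sharper bound $\dim F(\mathcal{C})\leq d-2$.

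I would first stratify $\mathcal{C}$ by corank, writing $\mathcal{C}=\bigsqcup_{k\geq 1}\Sigma^k$, where $\Sigma^k$ is the locus at which $J_F$ has corank exactly $k$. Since $\Sigma^k$ is cut out by the vanishing of all $(d-k+1)\times(d-k+1)$ minors of $J_F$, a Thom-Boardman codimension estimate bounds $\dim\Sigma^k\leq d-k^2$ generically. On $\Sigma^k$, the rank of $F$ itself is $d-k$, so $\dim F(\Sigma^k)\leq \min(d-k^2,d-k)=d-k^2$ for $k\geq 1$. For all $k\geq 2$ this is already at most $d-4\leq d-2$, leaving only the top stratum $\Sigma^1$ to analyze. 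This $\Sigma^1$ is a real-analytic hypersurface of dimension $d-1$ that could a priori map to an image of dimension up to $d-1$, precisely the ``wall'' scenario of \cref{fig:no_walls}b) that must be ruled out.

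To finish, I would aim to show that at a generic point $\theta_c\in\Sigma^1$, the one-dimensional kernel of $J_F(\theta_c)$ lies inside $T_{\theta_c}\Sigma^1$. Equivalently, the directional derivative of the determinant cutting out $\Sigma^1$ must vanish along the kernel vector, which is the algebraic signature that the singularity is non-fold in the sense of Whitney. Writing $F$ as an ordered product of one-parameter subgroups $\exp(-i\theta_j G_j/2)$ interleaved with fixed entanglers and right-translating to $\mathfrak{g}$, the columns of $J_F(\theta)F(\theta)^{-1}$ become conjugates of the generators $G_j$. The goal is then to extract from this structure an algebraic identity, likely involving Jacobi-style commutator relations among the $G_j$ at $\theta_c$, which forces the first-order variation of $\det J_F$ along the conjectured kernel direction to vanish.

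The hardest part will be exactly this last step: proving non-foldness uniformly across all PQCs satisfying the hypothesis. A generic smooth map $T^d\to\groupG$ does exhibit honest fold singularities whose critical values form a $(d-1)$-dimensional wall, so any valid argument must genuinely use that $F$ is a product of one-parameter subgroups of $\groupG$ and not merely smooth. If a clean structural proof proves elusive, a reasonable fallback would be to verify non-foldness by symbolic computation of $\det J_F$ and its first variation for small brick wall circuits, identify the algebraic identity responsible, and then attempt to lift it to arbitrary $n$ by induction on the layer structure of the ansatz.
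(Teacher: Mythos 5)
This statement is not proved in the paper: it is explicitly left as an open conjecture (\cref{conj:no_walls}), supported only by numerical experiments and low-dimensional examples, and the authors state that the absence of a proof of precisely this statement is what prevents a full universality proof. Your proposal is therefore not being measured against an existing argument, and it should be judged on its own terms --- where it is a sensible research program but not a proof. The stratification by corank and the reduction to the corank-one stratum $\Sigma^1$ are sound in outline: for $k\geq 2$ the image bound $\dim F(\Sigma^k)\leq d-k\leq d-2$ follows from the rank of $\mathrm{d}F$ restricted to the stratum, so the only dangerous piece is indeed the $(d-1)$-dimensional wall that $F(\Sigma^1)$ could form, exactly the scenario of \cref{fig:no_walls}b). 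Your identification of the needed property --- that at generic points of $\Sigma^1$ the one-dimensional kernel of $J_F$ lies inside $T_{\theta_c}\Sigma^1$, i.e., that the singularities are never honest Whitney folds --- is the right target, and it is consistent with simple examples such as the gimbal-lock locus of the $ZYZ$ Euler parametrization of $SU(2)$, where the kernel direction $\partial_{\theta_1}-\partial_{\theta_3}$ is tangent to the critical set $\{\theta_2\in\{0,\pi\}\}$ and the critical values form only a one-dimensional circle in a three-dimensional group.

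The genuine gap is that this non-foldness step is asserted as a goal rather than established, and it is the entire content of the conjecture. As you yourself note, a generic smooth map $T^d\to\groupG$ \emph{does} have fold singularities with $(d-1)$-dimensional critical value sets, so any proof must extract a concrete algebraic identity from the product-of-exponentials structure of $F$; no such identity is derived, and the proposed fallback (symbolic verification for small circuits plus an induction on layers) is a plan for future work, not an argument. A secondary flaw: the Thom--Boardman bound $\dim\Sigma^k\leq d-k^2$ holds for \emph{generic} maps in the Whitney topology, and a fixed PQC is precisely not generic --- indeed the conjecture only has content because PQCs are expected to behave non-generically on their critical sets --- so that estimate cannot be invoked for a fixed $F$. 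This does not sink the $k\geq 2$ case, since $\dim F(\Sigma^k)\leq d-k$ already suffices there, but it should be removed from the argument. As it stands, the proposal leaves the conjecture exactly as open as the paper does.
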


If this statement holds true, the PQC in question will always be able to ``move around" the critical values $F(\mathcal{C})$, which in turn will no longer separate the PQC image from full-dimensional subspaces of $\groupG$, asserting global surjectivity.
To summarize, if \cref{conj:no_walls} holds, we know our circuit families to be universal for $n\leq 5$.  
For $n>6$, we additionally require a proof that the Jacobian of the PQCs has full rank for some $\theta\in T^d$, implying full rank almost everywhere via \cref{lemma:full_rank_anywhere_ae}.

Useful properties for a proof of \cref{conj:no_walls} might hide in the algebraic geometry underlying the ansatz. Additional helpful features could lie in the analyticity of the circuits as well as the orthonormal Pauli basis of the Lie algebra $\mathfrak{g}$ that appears in the generators of the rotation gates. In this context, the rephrasing of our ansatz as a  PPR ansatz in \cref{app:pprs} might come in handy.

\section{Conclusion}

We presented families of brick wall circuits with optimal rotation gate and two-qubit gate counts to parametrize $SU(2^n)$, $SO(2^n)$, and $Sp^\ast(2^n)$. We conjecture these circuits to be universal for their respective group, and provide numerical evidence supporting this conjecture for $n=3, 4, 5$.
For instance, we compiled random unitaries to the proposed circuits, demonstrating that they can be used for unitary synthesis in practice. In addition, we showed numerically that their Jacobian has full rank at at least one point in parameter space, and proved that this implies full rank almost everywhere in parameter space.
While this does not rigorously prove universality, we stated a technical conjecture that would imply universality from observing full Jacobian rank anywhere in parameter space.

Currently, compilation is performed via variational optimization, which can be slow in practice even when using special-purpose numerical methods and matured tools like JAX and Optax. It would bring a great improvement in compilation time to find linear-algebra-based techniques to directly infer the rotation angles from the unitary matrix, as is done in methods based on recursive Cartan decompositions~\cite{Shende-Bullock-Markov,wierichs2025recursive}. Solving the system of non-linear equations $U_\text{target} = U_\text{ansatz}(\theta)$ numerically or analytically for the parameters $\theta$ using computer-assisted methods did not prove fruitful as it was already intractable for $n=3$.
Another interesting next step will be to understand the observed difficulty in compiling $SO(8)$ matrices.

\section{Acknowledgements}
We thank Paarth Jain, Robert Lang, and Danial Motlagh for providing data for vibronic Hamiltonian fragments, and Max West for providing code for sampling Haar random symplectic unitary matrices.

\bibliographystyle{quantum}
\bibliography{main}

\appendix

\section{Circuits for \texorpdfstring{$SO(2^n)$}{SO(2\^{}n)}}
\label{app:SON}

\begin{table*}[ht]
\centering
\begin{tabular}{l|c|c|c}
$n$ & $3$ & $4$ & $5$ \\
\hline
circuit & 
\scalebox{0.7}{
\begin{quantikz}
    & \gate{Y} & \ctrl{1}\block\gategroup[3,steps=4,style={inner sep=3pt,rounded corners,color=WeakGreen}]{$\times 6$} & \gate{Y} & & & \ctrl{1}\gategroup[3,steps=2,style={inner sep=3pt,rounded corners,color=XanaBlue}]{remainder} & \gate{Y} & \\
    & \gate{Y} & \ctrl{0} & \gate{Y} & \ctrl{1}\block & \gate{Y} & \ctrl{0} & &\\
    & \gate{Y} & & & \ctrl{0} & \gate{Y} & & &
\end{quantikz}
}

&

\scalebox{0.7}{
\begin{quantikz}
    & \gate{Y} & \ctrl{1}\block\gategroup[4,steps=4,style={inner sep=3pt,rounded corners,color=WeakGreen}]{$\times 19$} & \gate{Y} & & & \ctrl{1}\block\gategroup[4,steps=2,style={inner sep=3pt,rounded corners,color=XanaBlue}]{remainder} & \gate{Y} & \\
    & \gate{Y} & \ctrl{0} & \gate{Y} & \ctrl{1}\block & \gate{Y} & \ctrl{0} & \gate{Y} & \\
    & \gate{Y} & \ctrl{1}\block & \gate{Y} & \ctrl{0} & \gate{Y} & & & \\
    & \gate{Y} & \ctrl{0} & \gate{Y} & & & & &
\end{quantikz}
}

&

\scalebox{0.7}{
\begin{quantikz}
    & \gate{Y} & \ctrl{1}\block\gategroup[5,steps=4,style={inner sep=3pt,rounded corners,color=WeakGreen}]{$\times 61$} & \gate{Y} & & & \ctrl{1}\block\gategroup[5,steps=2,style={inner sep=3pt,rounded corners,color=XanaBlue}]{remainder} & \gate{Y} & \\
    & \gate{Y} & \ctrl{0} & \gate{Y} & \ctrl{1}\block & \gate{Y} & \ctrl{0} & \gate{Y} & \\
    & \gate{Y} & \ctrl{1}\block & \gate{Y} & \ctrl{0} & \gate{Y} & \ctrl{1} & \gate{Y} & \\
    & \gate{Y} & \ctrl{0} & \gate{Y} & \ctrl{1}\block & \gate{Y} & \ctrl{0} & & \\
    & \gate{Y} & & & \ctrl{0} & \gate{Y} & & &
\end{quantikz}
}

\\
\# parameters & $3 + 6\cdot2\cdot2+1 = 28$  & $4 + 19\cdot3\cdot2 + 2 = 120$ & $5 + 61 \cdot 4 \cdot 2 + 3 = 496$ \\
\# $\cz$s & $6\cdot2+1 = 13$ & $19\cdot3 + 1 = 58$ & $61\cdot4 + 2 = 246$ \\
\end{tabular}
\caption{Universal $SO(2^n)$ circuits for $n=3, 4, 5$ with optimal parameter count ($\text{dim}_{SO(2^n)} = 2^{n-1} (2^n - 1)$) and number of $\cz$ gates $\left\lceil\tfrac{1}{2}(\text{dim}_{SO(2^n)} - n)\right\rceil$ (see \cref{eq:lower_bound_2q_general}).}
\label{tab:SON}
\end{table*}

The optimal $SO(2^n)$ templates for $n=3, 4, 5$ are given in \cref{tab:SON}, following the pattern shown in \cref{sec:main_results}. Their universality is confirmed by the rank test of the Jacobian matrix, equalling the group's dimension; see~\cite{our_repo} for further detail. The minimum required two-qubit gate count can be derived in the same fashion as for $SU(2^n)$. We can always start with a layer of single-qubit $SO(2)$ gates on each of the $n$ qubits, which corresponds to just $R_Y$ rotations, implying $n_\text{initial}=n$. Then we can include $2$ parameterized $R_Y$ gates per $\cz$ that we add,~i.e., $n_\text{params/2q}=2$. Further, the dimension of $SO(2^n)$ is $d=\tfrac12 2^n (2^n -1)$, so that we obtain the lower bound also shown in \cref{eq:lower_bound_2q_so}:

\begin{align}
    c &\geq \left\lceil \frac{1}{2} \left(\frac{1}{2} 2^n (2^n -1) - n\right)\right\rceil.
\end{align}

\begin{figure}
    \centering
    \includegraphics[width=\linewidth]{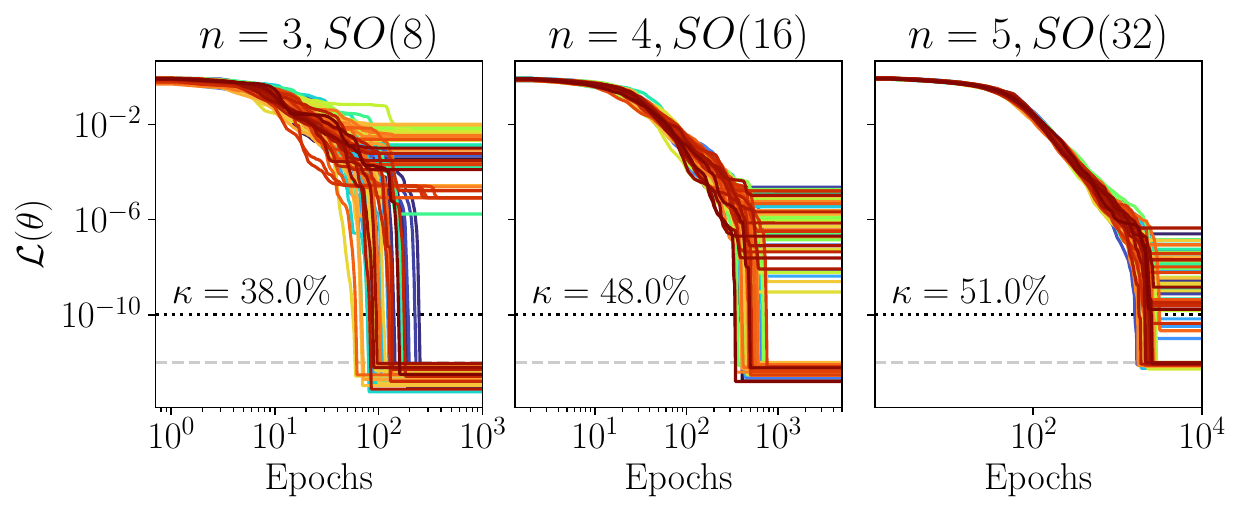}
    \includegraphics[width=\linewidth]{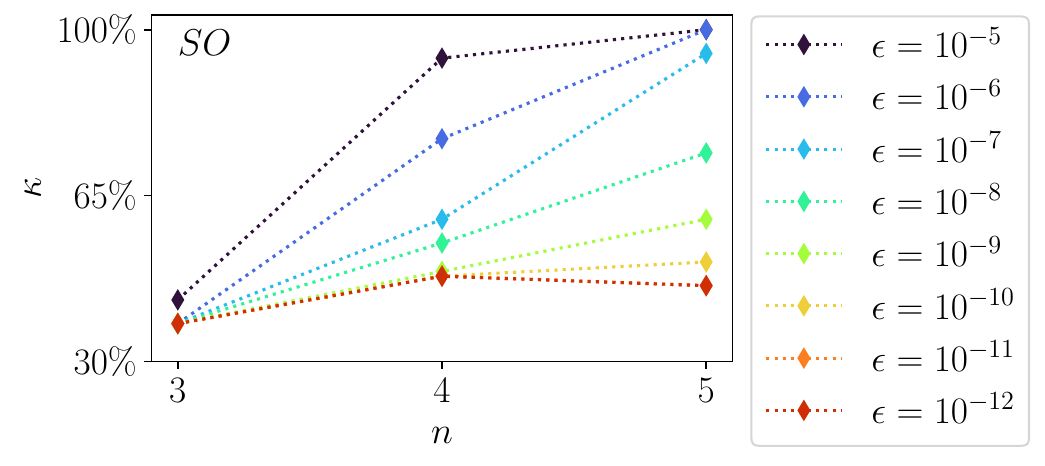}
    \caption{Cost function $\mathcal{L}(\theta)$ and success rates $\kappa$ for compiling $10$ random special orthogonal matrices on $n=3,4,5$ qubits, using $10$ attempts each; see \cref{fig:convergence,fig:successprob} and the text for more details. Note that for $n=3$,~i.e., the group $SO(8)$, three targets were not compiled successfully for any of the attempts.
    }
    \label{fig:convergence_success_rates_so}
\end{figure}

In addition to the Jacobian rank test, which is passed by the shown circuits for $n=3,4,5$, we perform the same numerical compilation experiment as in \cref{fig:convergence,fig:successprob} and report the results in \cref{fig:convergence_success_rates_so}. We find reliable compilation results, with the exception of some random $SO(8)$ matrices.
Our compiler did not succeed in finding circuit parameters for these ``difficult targets", even when allowing for more randomly initialized attempts.
Therefore, the evidence for universality for $SO(8)$ is less conclusive than for the other groups.

\section{Circuits for \texorpdfstring{$Sp^\ast(2^n)$}{Sp*(2\^{}n)}}
\label{app:SpN}

\begin{table*}[ht]
\centering
\begin{tabular}{l|c|c|c}
$n$ & $3$ & $4$ & $5$ \\
\hline
circuit & 

\scalebox{0.6}{
\begin{quantikz}
& \gate{ZYZ} & \gate{iY}\block \gategroup[3,steps=3,style={inner sep=3pt,rounded corners,color=WeakGreen}]{$\times 5$} & \gate{ZY} & \gate[3]{\ \ \ }\gategroup[3,steps=1,style={inner sep=0pt,rounded corners, color=Red,fill=white}]{}& \gate{iY}\gategroup[3,steps=2,style={inner sep=3pt,rounded corners,color=XanaBlue}]{remainder} & \gate{Z} & \\
& \gate{Y} & \ctrl{-1} & \gate{Y} & & \ctrl{-1} & & \\
& \gate{Y}& & & & & &
\end{quantikz}%
}

&

\scalebox{0.6}{
\begin{quantikz}
& \gate{ZYZ} & \emptyblock \gategroup[4,steps=3,style={inner sep=3pt,rounded corners,color=WeakGreen}]{$\times 14$} & \gate[3]{\ \ \ }\gategroup[3,steps=1,style={inner sep=0pt,rounded corners, color=Red,fill=white}]{} & \gate[4]{\ \ \ }\gategroup[4,steps=1,style={inner sep=0pt,rounded corners, color=Red,fill=white}]{} & \emptyblock\gategroup[4,steps=3,style={inner sep=3pt,rounded corners,color=XanaBlue}]{remainder} & \gate{iY} & \gate{Z} & \\
& \gate{Y} & & & & & & & \\
& \gate{Y} & & & & & \ctrl{-2}& & \\
& \gate{Y} & & & & & & &
\end{quantikz}
}

&

\scalebox{0.6}{
\begin{quantikz}
& \gate{ZYZ} & \gate[5]{\qquad} \gategroup[5,steps=1,style={inner sep=3pt,rounded corners,color=WeakGreen,fill=white}]{$\times 43$} & \gate{iY}\block\gategroup[5,steps=4,style={inner sep=3pt,rounded corners,color=XanaBlue}]{remainder} & \gate{ZY} & \gate{iY} & \gate{Z Y} & \\
& \gate{Y} & & \ctrl{-1} & \gate{Y} & & &\\
& \gate{Y} & & & &\ctrl{-2} & &\\
& \gate{Y} & & & & & & \\
& \gate{Y} & & & & & &
\end{quantikz}%
}

\\
\# parameters & $5 + 5\cdot6 + 1 = 36$  & $6 + 14\cdot9 + 4 = 136$ & $7 + 43 \cdot 12 + 5 = 528$ \\
\# $\text{C}(i\text{Y}$)s & $5\cdot2 + 1 = 11$ & $14\cdot3 + 2 = 44$ & $43\cdot4 + 2 = 174$ \\
\end{tabular}
\caption{Universal $Sp^\ast(2^n)$ circuits for $n=3, 4, 5$ with optimal parameter count ($\text{dim}_{Sp^\ast(2^n)} = 2^{n-1} (2^n + 1)$) and number of $\ciy$ gates $\left\lceil\tfrac{1}{3}(\text{dim}_{Sp^\ast(2^n)} - ( n + 2)\right\rceil$ (see \cref{eq:lower_bound_2q_general}). Red blocks extending to more than two qubits are two-qubit blocks acting on the first and last qubit in the block.}
\label{tab:SpN}
\end{table*}

The universal $Sp^\ast(2^n)$ circuits are summarized in \cref{tab:SpN}. Their universality is confirmed by the rank test of the Jacobian matrix, equalling the group's dimension~\cite{our_repo}. Note that even though $Sp^\ast(2^n)$ is higher dimensional than $SO(2^n)$, the circuits have fewer two-qubit gates. That is because $Sp^\ast(2^n)$ can be packed more densely with $3$ parameters per two-qubit gate (while maintaining universality), relative to the group's dimension (see \cref{eq:N_two_q_symplectic}).

The symplectic group\footnote{We use the physicist's convention of calling the symplectic unitary group simply the symplectic group. Further we call $Sp(\tfrac12 2^n) =: Sp^\ast(2^n)$.} is different to $SU(2^n)$ and $SO(2^n)$ in that there is one special qubit that acts differently than the others. This can be seen from the symplectic condition 

\begin{equation}
    Q \in Sp^\ast(2^n) \Leftrightarrow Q J Q^T = J,
\end{equation}
where $J$ is a fixed symplectic form, e.g., $J = i Y \otimes \id_{n-1}$\footnote{This choice of $J$ is arbitrary and we could choose a different qubit for the $Y$ operator. $\id_{n-1}$ is the identity matrix on $n-1$ qubits.}.

To build the symplectic circuit ansatz, we look at the symplectic algebra $\mathfrak{sp^\ast}(2^n)$ that generates the symplectic group. It can be parametrized as

\begin{equation*}
    \mathfrak{sp^\ast}(2^n) = \text{span}_\R \{\id \otimes \mathfrak{so}(2^{n-1}), \{X, Y, Z\} \otimes \mathfrak{so}(2^{n-1})^\perp \},
\end{equation*}
and has dimension $d=\tfrac12 2^n (2^n + 1)$. From this we see that the special qubit allows for arbitrary single-qubit gates, whereas all other qubits only allow for $R_Y$ rotations. For that reason, the initial layer of the circuit has $n_\text{initial}=n+2$ parameters, and the tightest possible packing of a symplectic circuit is achieved by maximizing the usage of the special qubit, as shown in \cref{sec:main_results}.

We used $\ciy$ as the entangler targeting the special qubit to ensure every gate in the ansatz is symplectic. Compared to a $\cy$ gate, it has an additional phase-shift $S$ gate on the control qubit:

\begin{center}
\scalebox{0.8}{
\begin{quantikz}
& \gate{iY} & 
\midstick[2,brackets=none]{=}
& \gate{Y} & & 
\midstick[2,brackets=none]{=}
& & \gate{Y} & \\
& \ctrl{-1}\wire[u]{q} &
& \ctrl{-1} & \gate{S} &
& \gate{S} & \ctrl{-1} &
\end{quantikz}%
}
\end{center}
The number of parameters per two-qubit gate in the ansatz is $n_\text{params/2q}=3$. Thus, the lower bound is given by

\begin{equation} \label{eq:N_two_q_symplectic}
    c \geq \left \lceil\frac{1}{3} \left(\frac{1}{2} 2^n (2^n + 1) - (n +  2)\right)\right\rceil
\end{equation}

\begin{figure}
    \centering
    \includegraphics[width=\linewidth]{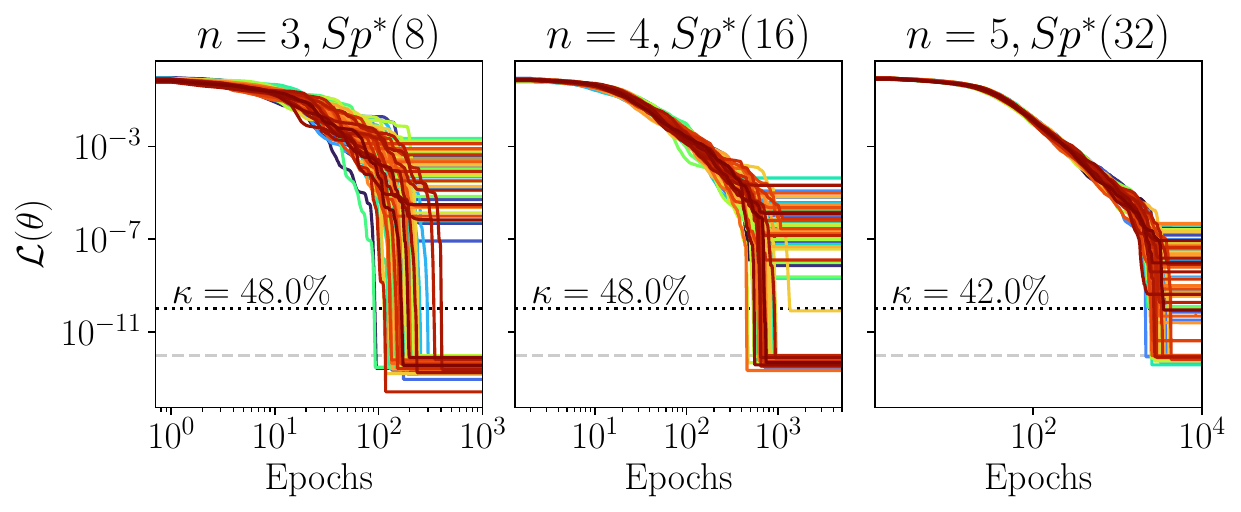}
    \includegraphics[width=\linewidth]{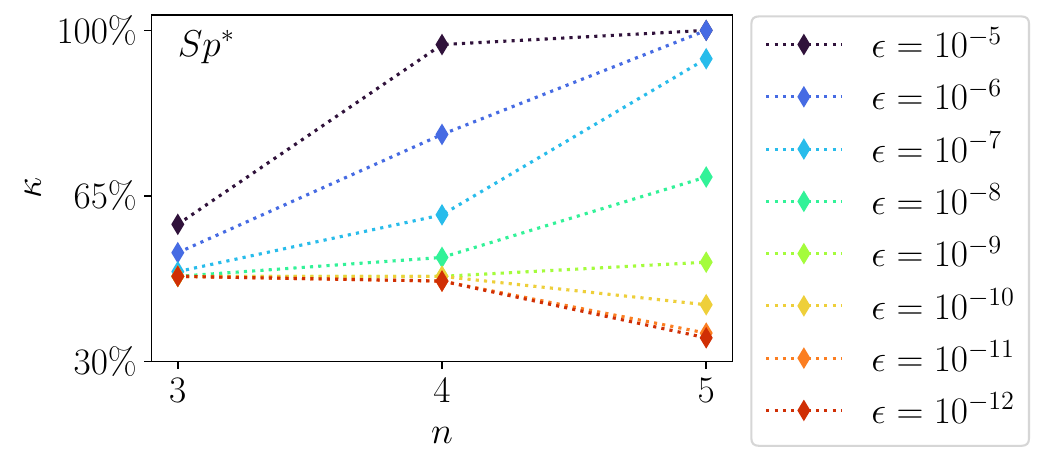}
    \caption{Cost function $\mathcal{L}(\theta)$ and success rates $\kappa$ for compiling $10$ random symplectic matrices on $n=3,4,5$ qubits, using $10$ attempts each; see \cref{fig:convergence,fig:successprob} and the appendix text for more details. 
    }
    \label{fig:convergence_success_rates_sp}
\end{figure}

In addition to the Jacobian rank test, which is passed by the shown circuits for $n=3,4,5$, we again perform the same numerical compilation experiment as in \cref{fig:convergence,fig:successprob} and report the results in \cref{fig:convergence_success_rates_sp}. We find reliable compilation results for all tested instances.

\section{Pauli product rotations (PPRs)}\label{app:pprs}

Most surface-code-based QEC schemes target Pauli product rotations (PPRs) as a gate set~\cite{Litinski2019}. It can therefore be desirable to directly provide circuits in this format. We directly translate our universal circuits in \cref{tab:SUN,tab:SON,tab:SpN}. We first note that most gates are already (single-qubit) Pauli rotation gates. All other static two-qubit gates (i.e., the $\cz$ and $\ciy$ gates) can be cancelled by using the commutation rules provided in Fig. 4 in~\cite{Litinski2019} and using the fact that $\cz^2 = \id$. For example, the $SU(2^3)$ circuit has the following regular form containing only PPRs:

\begin{center}
\scalebox{0.62}{
\begin{quantikz}
& \gate{ZYZ} & \gate[2,disable auto height]{\verticaltext{Y Z}} \gategroup[3,steps=10,style={inner sep=3pt,rounded corners,color=WeakGreen}]{$\times 4$} \gategroup[3,steps=5,style={inner sep=0pt,rounded corners,color=Red}]{} & \gate[2,disable auto height]{\verticaltext{Z Y}} & \gate{Z} & \gate[3,disable auto height]{\verticaltext{Z Y Z}} & & & \gate{Y} & \gate{Z} & & & \\
& \gate{ZYZ} &  & & \gate{Z} & & \gate[2,disable auto height]{\verticaltext{Z Y}} & \gate{Z} & \gate[2,disable auto height]{\verticaltext{Y Z}} & \gate{Z} & \gate{Y} & \gate{Z} & \\
& \gate{ZYZ} &  & & & & & \gate{Z} &  & & \gate{Y} & \gate{Z} &
\end{quantikz}%
}
\end{center}

The green block is repeated four times, though in the last block it only contains the gates in the inner red block. This way we obtain the desired $9 + 3*16 + 6 = 63$ parameters. This form was obtained by simply taking the $SU(2^3)$ circuit and commuting the static $\cz$ gates to cancel each other.
The parameters map 1-to-1 from the PPR and $(CZ, R_Y, R_Z)$ version of the circuit. Since the circuit structure is identical, we do not expect any significant changes in convergence when compiling using the PPR structure.

\subsection{Lie algebra product ansatz}

\begin{figure}
    \centering
    \includegraphics[width=1\linewidth]{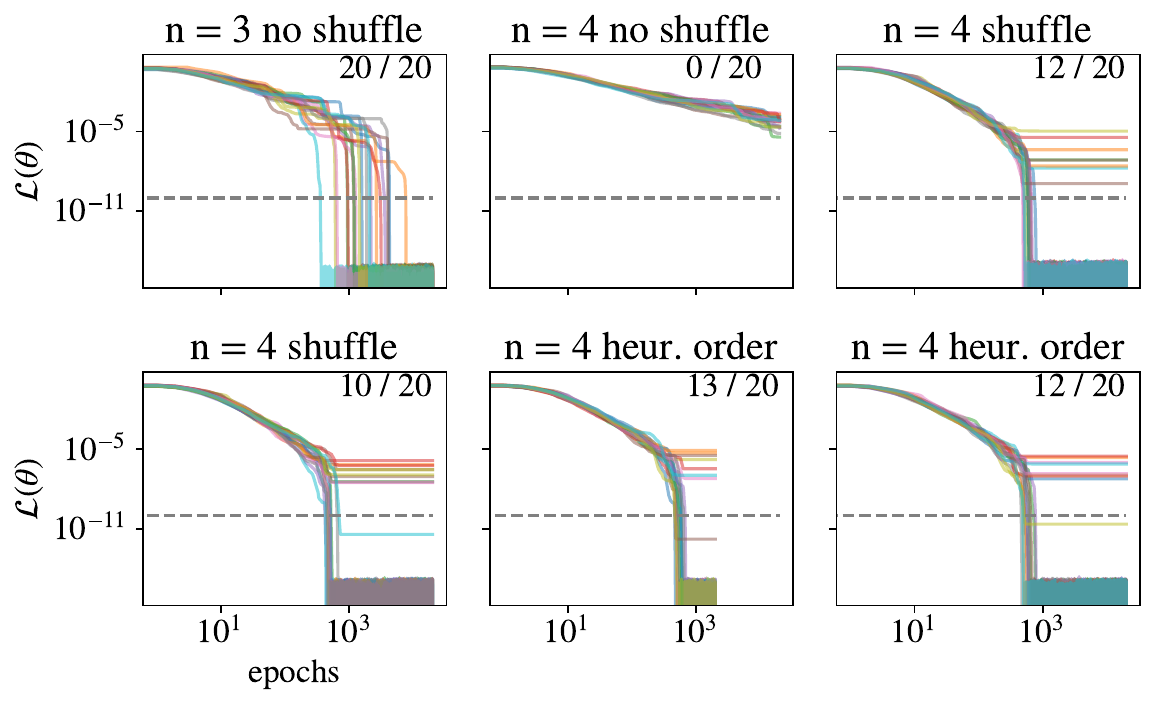}
    \caption{Cost function during optimization of simple Lie-algebra-product circuits, \cref{eq:lie_algebra_product_ansatz}. For $n=3$, a lexicographical ordering of Pauli words is used and found to work exceptionally well, with a $100\%$ convergence rate. For $n=4$, the same is hard to train, but becomes trainable when we shuffle the ordering, as exemplified by two random shufflings. A heuristic ordering using a greedy algorithm to order the Pauli operators such that no consecutive operators commute also yields good results. We show two examples, starting from the first and last word of the lexicographical ordering, respectively. See~\cite{our_repo} for more details.}
    \label{fig:PPR_full_algebra}
\end{figure}

We note that one can also construct universal circuits using the Lie algebra generators. Let us take the Pauli words $P$ as the generators of $\mathfrak{su}(2^n)$. Then we know that any unitary matrix $U\in SU(2^n)$ can be written as

\begin{equation}
    U = \exp\left(-i \sum_{P \in \mathfrak{su}(2^n)} \theta_P P\right).
\end{equation}
We claim that a product ansatz in terms of PPRs
\begin{equation}\label{eq:lie_algebra_product_ansatz}
    U = \prod_{P \in \mathfrak{su}(2^n)} e^{-i \theta_P P},
\end{equation}
with each of the Pauli words appearing only once, is also universal as long as we are careful with their ordering. In particular, we found that a lexicographical ordering of the Pauli words does not train well for $n\geq 4$, as seen in \cref{fig:PPR_full_algebra}. 

There are two different ways of making the ansatz universal. On one hand, we can randomly shuffle the Pauli words from the lexicographical ordering. In the unlikely event that the ansatz does not pass the rank test from \cref{sec:derivation_rank_test}, we shuffle again until that is the case. Alternatively, we can do a simple heuristic ordering: starting from a random or fixed operator, we order them such that no two consecutive Pauli words commute. For the latter, a simple greedy algorithm works (see ~\cite{our_repo}). In \cref{fig:PPR_full_algebra}, we show convergence for an ansatz with $n=4$ that has been randomly shuffled and using the heuristic odering. We tried different versions for $n=4$ and found that they all have convergence rates of around $50\%$ (see~\cite{our_repo}) or better.
One curiosity is that the circuit with lexicographical ordering works exceptionally well for $n=3$ and has a $100\%$ convergence rate as indicated in \cref{fig:PPR_full_algebra}.

\section{Expressibility by Sim et al. for \texorpdfstring{$n=4, 5$}{n=4,5}}\label{app:expressibility}

\begin{figure*}
    \centering
    \includegraphics[width=0.49\linewidth]{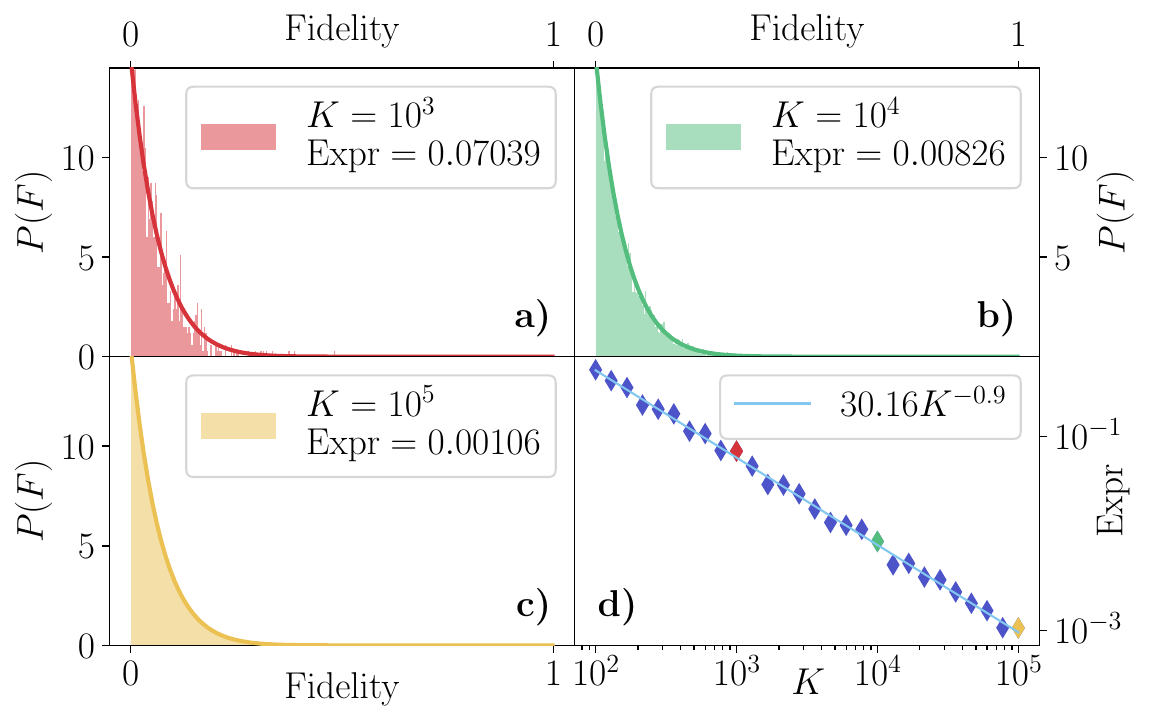}
    \includegraphics[width=0.49\linewidth]{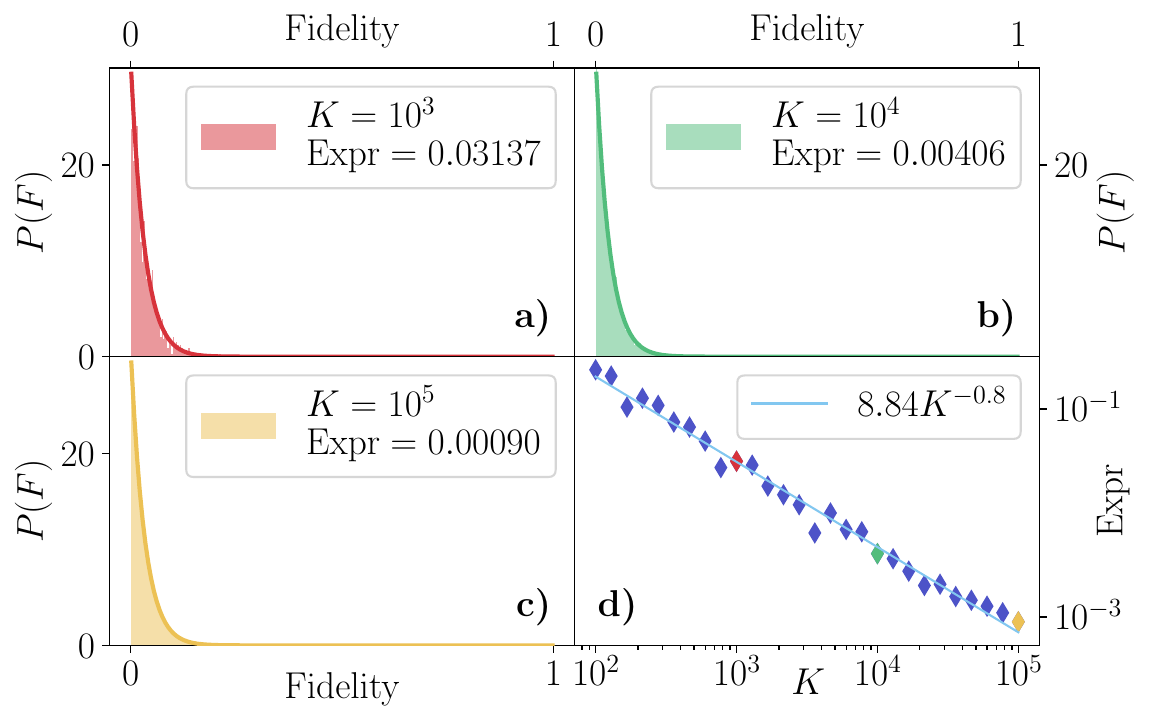}
    \caption{Numerical expressibility test from~\cite{Sim-Johnson-Aspuru-Guzik} for the regular, parameter-optimal circuit on four and five qubits. We repeat the same experiment as in \cref{fig:expressibility} for $n=4$ (left) and $n=5$ (right), restricting ourselves to smaller sample sizes for convenience. The same trend as for $n=3$ is observed.}
    \label{fig:expressibility_app}
\end{figure*}

Here we show the numerical results for the expressibility test by Sim et al.~\cite{Sim-Johnson-Aspuru-Guzik} for $n=4$ and $n=5$ for a smaller range of samples $100\leq K \leq 10^5$; see \cref{fig:expressibility_app}. We observe the same convergence as for $n=3$ in \cref{fig:expressibility}.

\section{Application to fast-forwardable Hamiltonians}
\label{sec:fast-forward}

A Hamiltonian with a polynomially sized \href{https://pennylane.ai/qml/demos/tutorial_liealgebra}{dynamical Lie algebra}~\cite{Korbinian2024lie} is said to be fast-forwardable, when there is a non-trivial \href{https://pennylane.ai/qml/demos/tutorial_fixed_depth_hamiltonian_simulation_via_cartan_decomposition#goal-fast-forwarding-time-evolutions-using-the-kak-decomposition}{horizontal Cartan decomposition}\footnote{The word \textit{horizontal} here refers to the special case where the Hamiltonian $H$ is in the horizontal subspace of the Cartan decomposition, such that the KAK decomposition is of the form $H = K A K^\dagger$, rather than the more general \href{https://pennylane.ai/qml/demos/tutorial_kak_decomposition}{KAK decomposition}~\cite{DavidWierichs2024KAK} form $H = K_1 A K_2$, which does not yield a fast-forwarding.}~\cite{Kokcu2022,Kottmann2024demo}.
Such Hamiltonians are also interesting targets for our adaptive compilation scheme. Let us look at the Ising Hamiltonian,

\begin{equation}\label{eq:ising_Ham}
    H_\text{Ising} = \sum_{j=1}^n Z_j + \sum_{j=1}^{n-1} X_j X_{j+1},
\end{equation}
which we know how to efficiently fast-forward using a horizontal KAK decomposition.
We can use the techniques described in~\cite{Kokcu2022} with the implementation from~\cite{Kottmann2024demo}, or the non-variational alternative provided in~\cite{wierichs2025recursive}, to find the horizontal KAK decomposition of $H_\text{Ising}$. This then yields

\begin{equation}\label{eq:KAK_Ising}
    H_\text{Ising} = K_\text{Ising} A_\text{Ising}  K_\text{Ising}^\dagger,
\end{equation}
where $K_\text{Ising} = \prod_{j=1}^{|\mathfrak{k}|} e^{-i \theta_j k_j}$ and $A_\text{Ising} = \prod_{j=1}^{|\mathfrak{a}|} e^{-i \theta_j a_j}$.

The operators $k_j \in \mathfrak{k}$ and $a_j \in \mathfrak{a}$ stem from the Cartan decomposition of the dynamical Lie algebra, $\mathfrak{g} = \mathfrak{k} \oplus \mathfrak{p}$, where $\mathfrak{a}\subset\mathfrak{p}$ is a (horizontal) Cartan subalgebra within the horizontal space $\mathfrak{p}$. For $n=3$ and the concurrence involution $\Theta(g) = -g^T$ (see~\cite{Dagli2008} and \href{https://docs.pennylane.ai/en/stable/code/api/pennylane.liealg.concurrence_involution.html}{qml.liealg.concurrence\_involution}~\cite{concurrence_involution}), 
we obtain the dynamical Lie algebra

\begin{multline*}
    \mathfrak{g} = \{XXI, IXX, ZII, IZI, IIZ, YXI, XYI, \\ IXY, IYX, YYI, IYY, XZX, YZX, XZY, YZY\},
\end{multline*}
its subalgebra
\begin{equation*}
    \mathfrak{k} = \{k_j\} = \{YXI, XYI, IXY, IYX, YZX, XZY\},
\end{equation*} 
and one of its \href{https://docs.pennylane.ai/en/stable/code/api/pennylane.liealg.horizontal_cartan_subalgebra.html}{horizontal Cartan subalgebras} ~\cite{horizontalCSA}
\begin{equation}
    \mathfrak{a} = \{ a_j\} = \{XXI, YYI, IIZ\}.
\end{equation}

We now provide the resulting coefficients $\theta_j$ and corresponding $k_j$ operators for $K_\text{Ising}$:

\begin{align*}
-1.0387190968491462 \cdot YXI \\
 1.317475393343197 \cdot XYI \\
 -1.9850892304110668 \cdot IYX \\
 -0.4142929036161686 \cdot IXY \\
 0.25332093345170165 \cdot YZX \\
 1.038719096849147 \cdot XZY.
\end{align*}

Further, $\theta_j$ and the corresponding $a_j$ operators for $A_\text{Ising}$ are as follows:

\begin{align*}
 1.2469796037174674 \cdot XXI \\
 1.8019377358048387 \cdot IIZ \\
-0.44504186791262884 \cdot YYI.
\end{align*}

We refer to \cite{our_repo} for the explicit computation. All coefficients are non-Clifford, so we overall have $2\cdot 6 + 3 = 15$ non-Clifford \href{https://pennylane.ai/compilation/pauli-product-rotations}{Pauli product rotations}~\cite{PPR}.

Our black-box adaptive compilation strategy \texttt{unicirc.compile\_adapt} matches (or improves) the $15$ non-Clifford rotation gates of the KAK decomposition without any encoded knowledge over a period of times from $t=[0.5, 10]$, as indicated in \cref{fig:adapt-Ising}.

\section{Details and proofs for mathematical considerations}\label{app:math_details}
Here we provide a more formal version of and proofs for the mathematical statements in \cref{sec:math}.
First, we need to make our notion of a PQC more rigorous. We assume knowledge of the Clifford and Pauli groups.

\begin{definition}
    A parametrized quantum circuit (PQC) is a quantum circuit consisting of Clifford gates and $d$ rotations chosen from the set $\{R_X, R_Y, R_Z\}$, with one value of the input vector feeding into one rotation each, without preprocessing. We identify it with the map
    \begin{align*}
        F : T^d\to\groupG\subset\C^{2^n\times 2^n},
    \end{align*}
    that results from evaluating the matrix of the circuit in some basis.
    The Jacobian of the PQC, at a point $\theta\in T^d$, is the differential $\mathrm{d}F$ expressed with respect to that basis,
    \begin{align}
        J_F(\theta)=\mathrm{d}F_\theta : \R^d =T_\theta T^d \to T_{F(\theta)}\groupG,
    \end{align}
    where $T_{F(\theta)}\groupG$ is the tangent space of $\groupG$ at the point $F(\theta)$, which is isomorphic to the Lie algebra $\mathfrak{g}$ of $\groupG$.
\end{definition}
$J_F(\theta)$ can be expressed conveniently as a matrix of shape $(d, 4^n)$.
From here on we will look at PQCs with $d=\dim(\groupG)\leq 4^n$.

Next, we prove that the image of a PQC does not have measure-zero holes poked into it.
While this is quite immediate from the nice properties that the involved spaces and the PQC have, it may provide some intuition as to how global surjectivity of PQCs can fail in the first place.

\densitysurjectivity*
\begin{proof}
    A PQC $F$ is a smooth, and in particular continuous, map and thus it maps its domain $T^d$, which is compact, to its compact image $F(T^d)\subset\groupG$. As $\groupG$ is a Lie group, it is a Hausdorff space, making $F(T^d)$ a closed set. 
    Now assume the image $F(T^d)$ to be dense in $\groupG$, so that its closure equals $\groupG$.
    As the image is already closed, it is the entire group $\groupG$.
\end{proof}

Next, let us prepare to prove \cref{lemma:full_rank_anywhere_ae}, which is forthcoming below.
For this, we will use that Pauli rotation gates together with Clifford gates can be traced to produce analytic functions, which come with strong properties. In particular, adjoining Pauli words with Clifford gates and Pauli rotation gates produces linear combinations of Pauli words with coefficients that are analytic functions in the circuit parameters.

\fullrankanywhereae*
\begin{proof}
    A convenient way to prove this lemma is to look at the (Jacobian) Gram matrix of the PQC $F$, constructed from its Jacobian matrix as 
    \begin{align}
    G_F: T^d &\to\R^{d\times d}\\
        \theta &\mapsto G_F(\theta)\\
        G_F(\theta)_{jk}
        &= \left\langle \frac{\partial F(\theta)}{\partial \theta_j}, \frac{\partial F(\theta)}{\partial \theta_k}\right\rangle
        =\frac{1}{2^n}\text{tr}\left(\frac{\partial F(\theta)^\dagger}{\partial \theta_j}\frac{\partial F(\theta)}{\partial \theta_k}\right),
    \end{align}
    where $\langle \cdot,\cdot \rangle$ denotes a trace inner product on $\C^{2^n\times 2^n}$.
    The Jacobian component $\tfrac{\partial F(\theta)}{\partial \theta_j}$ can be rewritten as 
    \begin{align*}
        \frac{\partial F(\theta)}{\partial \theta_j}&=i H_j(\theta) F(\theta)\\ 
        \text{ with }\ 
        iH_j(\theta) &= F_{[:j]}(\theta) \ iP_j\  F_{[:j]}(\theta)^\dagger.
    \end{align*}
    Here, $iH_j(\theta)$ is the so-called (left-)effective generator of $F$ for $\theta_j$, which is computed from the subcircuit $F_{[:j]}$ containing all gates up to the one using $\theta_j$, and from the (depending on convention, rescaled) Pauli word generator $iP_j$.
    $H_j(\theta)$ is Hermitian because $P_j$ is, and it can thus be decomposed in the Pauli basis. The coefficients in this basis are polynomials in trigonometric functions of the (potentially rescaled) parameter $\theta$, making them analytic functions.
    Thus, a Gram matrix entry $G_{jk}$, which equals the inner product of the coefficient vectors associated to the derivatives with respect to $\theta_j$ and $\theta_k$, is an analytic function as well.

    $J_F(\theta)$ has full rank if and only if $G_F(\theta)$ does, namely if and only if $\det{G_F(\theta)}\neq 0$.
    The determinant is an analytic function, as it is a polynomial of the Gram matrix entries.
    Following from the identity theorem, analytic functions can only take the value zero in line with two possibilities: they are the constant zero function, or they vanish on a set of measure zero within their domain. As we know that $J_F$, and thus $G_F$, has full rank at least at one point $\theta_r\in T^d$, we know that $\det{G_F(\theta_r)}\neq 0$ and thus that $\det{G_F}$ is not the constant zero function.
    It therefore only vanishes on a measure-zero subset of $T^d$, so that $J_F$ has full rank almost everywhere, i.e., the critical set $\mathcal{C}$ of $F$ has measure zero as well.
\end{proof}

\end{document}